\newif\ifPDF
\def\ScaleIfNeeded{%
\ifdim\Gin@nat@width>\linewidth
\linewidth
\else
\Gin@nat@width
\fi
}
\def\BibTeX{{\rm B\kern-.05em{\sc i\kern-.025em b}\kern-.08em
    T\kern-.1667em\lower.7ex\hbox{E}\kern-.125emX}}
\newtheorem{theorem}{Theorem}
\begin{document}

\title{Spectrum Sharing-based Multi-hop Decode-and-Forward Relay Networks under Interference Constraints: Performance Analysis and Relay Position Optimization}
\author{Vo Nguyen Quoc Bao, Tran Thien Thanh, Tuan Duc Nguyen, Thanh Dinh Vu
\thanks{V. N. Q. Bao is with Posts and Telecommunications Institute of Technology (PTIT), Ho Chi Minh City, Vietnam, email: baovnq@ptithcm.edu.vn.}
\thanks{T. T. Thanh and T. D. Vu are with Ho Chi Minh City University of Technology (HCMUT), Vietnam National University Ho Chi Minh City (VNUHCM).}
\thanks{T. D. Nguyen is with International University (IU), Vietnam National University Ho Chi Minh City (VNUHCM).}
} \markboth{THE JOURNAL OF COMMUNICATIONS AND
NETWORKS, VOL. X, NO. X, XXX. 200X}{Vo Nguyen Quoc Bao
\lowercase{\textit{et al}}.: Spectrum Sharing-based Multi-hop Decode-and-Forward Relay Networks}
\maketitle

\begin{abstract}
The exact closed-form expressions for outage probability and bit error rate of spectrum sharing-based multi-hop decode-and-forward (DF) relay networks in non-identical Rayleigh fading channels are derived. We also provide the approximate closed-form expression for the system ergodic capacity. Utilizing these tractable analytical formulas, we can study the impact of key network parameters on the performance of cognitive multi-hop relay networks under interference constraints. Using a linear network model, we derive an optimum relay position scheme by numerically solving an optimization problem of balancing average signal-to-noise ratio (SNR) of each hop. The numerical results show that the optimal scheme leads to SNR performance gains of more than 1 dB. All the analytical expressions are verified by Monte-Carlo simulations confirming the advantage of multihop DF relaying networks in cognitive environments. 
\end{abstract}

\begin{keywords}
Ergodic Capacity, Rayleigh fading channels, Cognitive radio, Underlay relay networks, Amplify-and-Forward, Decode-and-Forward. 
\end{keywords}

\section{Introduction}
Cognitive radio (CR) has drawn considerable attention in
the academic and  industrial communities in the past few years and
has been considered as one of the main feature for future wireless
networks \cite{Wang2011,Filin2011}. As an evolution of software-defined radio (SDR), CR with
ability of learning  from its surroundings and adapting its
transmitting configurations allows secondary (unlicensed) users to
opportunistically transmits data in bands licensed to primary users \cite{Mitola1999,Haykin2005,Pawelczak2011}.
As a result, it can alleviate the problem of spectrum congestion and
thus allowing for a more efficient spectrum utilization \cite{Goldsmith2009}.

Recently, CR has also been considered as the radio platform for
relaying networks. Previous works on cognitive networks have assumed
two types of cognitive operational modes including opportunistic
spectrum access (OSA) and spectrum sharing (SS) \cite{Goldsmith2009,Qing2007}. In the OSA
approach, CR users are allowed to transmit over the same frequency
band licensed to primary users (PUs) only when the frequency band is detected
vacant \cite{Lee2007,Suraweera2009,Bao2010}. For the latter case, secondary users (SUs) consisting of the source and relays may take advantage of a PU's frequency band by opportunistically
transmitting with high power as long as the PU's transmission is
strictly protected \cite{Ban2009,Lee2011}. In spite of the burden on PUs, the SS approach
can improve spectral efficiency more aggressively than the OSA
approach.

Recently, the performance analysis for SS-based cognitive relay networks has gained great attention, (see, e.g. \cite{Guo2010,Si2011,DuongEL2011,Duong_EL_Dec2012,Blagojevic2012,Li2012}), due to its adaptability to extend the coverage for wireless networks. This situation may arise in practice when data is sent from a cognitive source to a given cognitive destination on a hop-by-hop basis via various intermediate cognitive relay nodes. In
addition to the advantage of extending the coverage without using large power at the transmitter, cognitive relay networks are able to reduce interference causing to PUs.

In particular, Guo \emph{et. al.} in \cite{Guo2010} derived the upper-bound for outage probability of underlay selective DF relay networks operating within the constraint imposed on the peak power received at the primary receiver. In \cite{Si2011}, only based on the partial channel state information (CSI) of primary user link and partial CSI between the relays and the primary user receiver, Li proposed two relay selection schemes for cognitive relay networks. The system performance in terms of outage probability was also provided over Rayleigh fading channels. For amplify-and-forward (AF) relaying, \cite{DuongEL2011} and \cite{Duong_EL_Dec2012} studies the dual hop relaying networks with and without considering combining technique at the secondary destination, respectively. Equipped multi antenna for secondary nodes, the work in \cite{Blagojevic2012} investigated the ergodic capacity for secondary underlay multi-input multi-output (MIMO) networks, where transmit antenna selection (TAS) and maximal ratio combining (MRC)  are used at the transmitter and receiver, respectively. In parallel, Dong Li \cite{Li2012} analyzed the effect of maximum-ratio combining diversity on the performance of underlay single-input multi-output (SIMO) systems where the transmit power and the interference power constraint are taken into accounts. All of the theoretical performance analyses of cognitive relay networks mentioned above have just only been solved for the particular case of two consecutive hops except for the paper \cite{Bao2012}. In \cite{Bao2012}, the authors studied the performance of cognitive underlay multihop decode-and-forward relaying networks over Rayleigh fading channels. However, the results in this paper is limited while it only provided the closed-form expression of the system outage probability under interference constraints. 

Therefore, in this paper, we develop a performance analysis framework for SS-based multi-hop CR networks. In particular, we derive exact closed-form expressions for outage probability (OP) and bit error rate (BER) as well as the approximate closed-form expression for ergodic capacity for the considered system over independent but non-identical distributed (i.n.d.) Rayleigh fading channels. To gain insights, the asymptotic approximation for outage probability, bit error rate at high SNR regime, is also provided. We have shown that the system diversity order is always one regardless the number of hops and the coding gain increases according to the increase of the number of hops. For a pre-determined position of a primary receiver, the problem of relay position optimization is also considered and solved by using the numerical approach. Finally, simulation results are provided to validate the analytical performance assessments.

The remainder of this paper is organized as follows. In Section~\ref{SystemModel}, the system model for underlay cognitive multihop network is described. In Section~\ref{PerformanceAnalysis}, the system performance metrics in terms of outage probability, bit error rate and ergodic capacity are derived for Rayleigh fading channels. In Sect.~\ref{Sec:RelayPositionOptimization}, we are concerned with the problem of relay position optimization. Numerical results are given in Section~\ref{NumericalResults}, where the advantage of cognitive underlay multhop systems is investigated. Finally, conclusions are drawn in Section~\ref{Conclusion}.

\section{System Model}

\begin{figure}[!h]
\centering
\includegraphics[width=\ScaleIfNeeded]{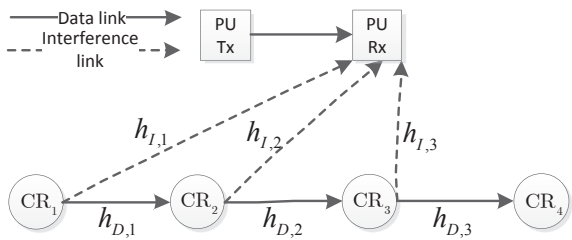}
\caption{System model of a multihop spectrum sharing communication
system.} \label{SystemModel}
\end{figure}

We consider a multi-hop SS system with the coexistence of PUs, i.e.,
licensed users, and SU, i.e., unlicensed users, as shown in
Fig.~\ref{SystemModel}. Assume that PUs and SUs share the same
narrow-band frequency with bandwidth $B$, which is licensed to PUs.
In the primary network, the PU transmitter (PU-Tx) transmits its
data towards the PU receiver (PU-Rx). In the secondary network, the
SU-Tx ($\text{CR}_1$) indirectly transmits the message towards the
SU-Rx ($\text{CR}_{K+1}$) with the help of $K$-1 cognitive
decode-and-forward (regenerative) relays in between denoted by
$\text{CR}_2,\ldots,\text{CR}_K$. Following the SS-based (underlay) approach, SUs are allowed to
concurrently transmit with PU over the same licensed band while
adhering to the interference constraint on the PU-Rx, i.e., any data
transmission of SUs resulting in a higher interference level than an
interference temperature at the PU-Rx is prohibited. To represent
the maximum allowable interference power level at the PU-Rx, the
interference temperature (${\mathcal{I}}_p$), is used \cite{Guo2010,Lee2011,Si2011}. Let $h_{D,k}$ and $h_{I,k}$ be the channel coefficients of the link
from the $k$-th SU-Tx to the next SU-Rx and to the PU-Rx,
respectively. Under Rayleigh fading, $|h_{D,k}|^2$ and $|h_{I,k}|^2$
are exponential distributed with their corresponding parameters
${\lambda _{D,k}} = {\mathbb{E}}\{ |{h_{D,k}}{|^2}\} $ and ${\lambda _{I,k}} =
{\mathbb{E}}\{|{h_{I,k}}{|^2}\}$, where ${\mathbb{E}}\{.\}$ stands for the expectation
operation. Perfect channel state information (CSI) of the
$\text{CR}_k$ $\to$ PU's Rx link is assumed at the
$\text{CR}_k$\footnote{It can be realized by many mechanisms, e.g.,
direct feedback from primary receivers, indirect feedback from band
manager \cite{Peha2005} or using CSI of the $\text{PU-CR}_k$ link
with channel reciprocity property \cite{Zhao2008}.}. We further
assume that the additive white Gaussian noise (AWGN) associated with
each hop is a zero-mean Gaussian random variable with variance
${\mathcal{N}}_0$.

\section{Performance Analysis}
\label{PerformanceAnalysis}

Consider the $k$-th hop of CR multihop networks, the instantaneous
signal-to-noise ratio is written as 
\begin{align}
{\gamma _k} = {P_k}\frac{{{{\left| {{h_{D,k}}} \right|}^2}}}{{{{\cal
N}_0}}}.
\end{align}
To ensure that the interference power at the PU-Rx is always below
the interference temperature, ${\mathcal{I}}_p$, the transmit power is
upper bounded by 
\begin{align}
P_k \leq {{\mathcal{I}}_p}/{|h_{I,k}|^2}.
\end{align}
Aiming to enhance the system performance, we adopt 
\begin{align}
P_k = {{\mathcal{I}}_p}/{|h_{I,k}|^2},
\end{align}
yielding ${\gamma _k} = \frac{{{{\cal I}_p}}}{{{{\cal N}_0}}}\frac{{{{\left|
{{h_{D,k}}}. \right|}^2}}}{{{{\left| {{h_{I,k}}} \right|}^2}}}$.

For Rayleigh fading channels, the probability density function (PDF)
of $\gamma_{{\mathsf{Z}},k}=|h_{{\mathsf{Z}},k}|^2$   with ${\mathsf{Z}}
\in \{D,I\}$ is of the form ${f_{{\gamma _{{\mathsf{Z}},k}}}}(\gamma )
= \frac{1}{{{{\bar \gamma }_{{\mathsf{Z}},k}}}}{e^{ - \frac{\gamma
}{{{{\bar \gamma }_{{\mathsf{Z}},k}}}}}}$, where ${\bar \gamma
}_{{\mathsf{Z}},k}={\lambda _{{\mathsf{Z}},k}}$. The PDF of the received
SNR at hop $k$, $\gamma_k$, is derived as \cite[p. 187, eq. 6-60]{Papoulis2002}
\begin{eqnarray}
\label{PDF_gamma_k} {f_{{\gamma _k}}}(\gamma )
            &=&\int\nolimits_0^{
\infty }\!\!{\frac{x{{\cal N}_0}}{{{\cal I}_p}}} \!{f_{{\gamma
_{D,k}}}}\!\!\left(\!\frac{{x}\gamma{\cal N}_0}{{{\cal
I}_p}}\!\right)\!{f_{{\gamma _{I,k}}}}({x})d{x}\nonumber\\
            &=&\frac{{{\alpha
_k}}}{{{{\left( {\gamma  \!+\! {\alpha _k}} \right)}^2}}},
\end{eqnarray}
where ${\alpha _k} = \frac{{{{\bar \gamma }_{D,k}}}}{{{{\bar \gamma
}_{I,k}}}}{{{{\cal I}_p}} \mathord{\left/
 {\vphantom {{{{\cal I}_p}} {{{\cal N}_0}}}} \right.
 \kern-\nulldelimiterspace} {{{\cal N}_0}}}$.
From \eqref{PDF_gamma_k}, the corresponding cumulative distribution
function (CDF) is given by
\begin{eqnarray}
\label{CDF_gamma_k}
{F_{{\gamma _k}}}(\gamma ) &=& \int\nolimits_0^{\gamma} {{f_{{\gamma _k}}}(x)dx } \nonumber\\
                                               &=& \frac{\gamma}{{\gamma  + {\alpha _k}}}.
\end{eqnarray}
Having the PDF and CDF of each $\gamma_k$ in hands, we are now in a
position to derive the performance metrics of the system including
outage probability, bit error probability, and ergodic capacity.
\subsection{Outage Probability}
In an interference-limited multi-hop regenerative relay system, an
outage event is declared whenever any of the instantaneous forwarded
SNRs of the $K$ hops falls below a given threshold, $\gamma_{th}$. In other words, the overall system outage is dominated by the weakest hop. Thus, the end-to-end (e2e) outage probability is written as \cite{Hasna2003}
\begin{eqnarray}
\label{OP_exact}
{\rm{OP}} = \Pr \left[ {\min ({\gamma _1}, \ldots ,{\gamma _K}) < {\gamma _{th}}} \right].
\end{eqnarray}
Since $\gamma_k$ with $k=1,\ldots,K$ are assumed to be independent of each other, \eqref{OP_exact} is rewriten as
\begin{eqnarray}
\label{OP_exact_1}
{\rm{OP}} =  1 - \prod\limits_{k = 1}^K {\left[ {1 - {F_{{\gamma _k}}}({\gamma _{th}})} \right]}. 
\end{eqnarray}
Substituting \eqref{CDF_gamma_k} into \eqref{OP_exact_1}, we have 
\begin{eqnarray}
\label{OP_exact_2}
{\rm{OP}} = 1 - \prod\limits_{k = 1}^K {\frac{\alpha_k}{\gamma_{th} + \alpha_k}}.
\end{eqnarray}
\begin{theorem}
At high SNR regime, the system OP can be approximated as
\begin{equation}
\label{OP_approx}
{\rm{OP}} \to \frac{{{\gamma _{th}}}}{{{\textstyle{{{{\cal I}_p}} \over {{{\cal N}_0}}}}}}\sum\limits_{k = 1}^K {\frac{{{\lambda _{I,k}}}}{{{\lambda _{D,k}}}}}.
\end{equation}
\end{theorem}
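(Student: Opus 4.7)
The plan is to start from the exact outage expression \eqref{OP_exact_2} and perform a first-order asymptotic expansion as $\mathcal{I}_p/\mathcal{N}_0 \to \infty$. Note that by definition $\alpha_k = (\lambda_{D,k}/\lambda_{I,k})(\mathcal{I}_p/\mathcal{N}_0)$, so $\alpha_k \to \infty$ in the high-SNR regime while $\gamma_{th}$ is fixed, making $\gamma_{th}/\alpha_k$ the natural small parameter.

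First I would rewrite each factor in the product as
\begin{equation*}
\frac{\alpha_k}{\gamma_{th}+\alpha_k} = \frac{1}{1+\gamma_{th}/\alpha_k} = 1 - \frac{\gamma_{th}}{\alpha_k} + O\!\left(\frac{\gamma_{th}^2}{\alpha_k^2}\right).
\end{equation*}
Next, I would take the product over $k=1,\ldots,K$ and retain only the leading (first-order) terms. Since each factor is $1 - \gamma_{th}/\alpha_k + O(\alpha_k^{-2})$, a standard expansion gives
\begin{equation*}
\prod_{k=1}^{K}\!\left(1-\frac{\gamma_{th}}{\alpha_k}\right) = 1 - \gamma_{th}\sum_{k=1}^{K}\frac{1}{\alpha_k} + O\!\left(\frac{1}{(\mathcal{I}_p/\mathcal{N}_0)^2}\right),
\end{equation*}
because all cross terms involve products of at least two factors $\alpha_k^{-1}$ and are therefore of higher order. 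Subtracting from $1$ as in \eqref{OP_exact_2} yields $\mathrm{OP} \to \gamma_{th}\sum_{k=1}^{K}\alpha_k^{-1}$.

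Finally, substituting $\alpha_k^{-1} = (\lambda_{I,k}/\lambda_{D,k})(\mathcal{N}_0/\mathcal{I}_p)$ and factoring the common $\mathcal{N}_0/\mathcal{I}_p$ outside the sum immediately produces the claimed asymptote \eqref{OP_approx}. No real obstacle arises; the only point that needs minor care is to justify that the neglected cross terms and the $O(\alpha_k^{-2})$ remainders are genuinely dominated, which follows because every such term carries an extra factor of $\mathcal{N}_0/\mathcal{I}_p$ and hence vanishes relative to the retained first-order sum as $\mathcal{I}_p/\mathcal{N}_0 \to \infty$. As a side remark, this also confirms the diversity order is one (the OP decays as $(\mathcal{I}_p/\mathcal{N}_0)^{-1}$ regardless of $K$), while the coding-gain-like factor $\sum_k \lambda_{I,k}/\lambda_{D,k}$ explicitly shows how adding hops worsens the outage floor constant, consistent with the qualitative discussion in the introduction.
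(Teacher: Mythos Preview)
Your proof is correct and follows essentially the same first-order expansion as the paper: the paper expands $1-\prod_k[1-F_{\gamma_k}(\gamma_{th})]$ by dropping cross terms and then uses $x/(1+x)\approx x$, while you equivalently expand each factor $\alpha_k/(\gamma_{th}+\alpha_k)=1-\gamma_{th}/\alpha_k+O(\alpha_k^{-2})$ first and then drop higher-order products---the two orderings are interchangeable and the justification for discarding $O((\mathcal{I}_p/\mathcal{N}_0)^{-2})$ terms is identical. One small caveat on your closing remark: the paper actually concludes that increasing $K$ \emph{improves} the coding gain (Theorem~\ref{theo:2} and the surrounding discussion), because in their linear-network model adding hops shortens each $d_{D,k}$ and hence decreases each ratio $\lambda_{I,k}/\lambda_{D,k}$; your statement that ``adding hops worsens the outage floor constant'' holds only if the per-hop ratios are kept fixed, which is not the operating assumption here.
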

\begin{proof}
We start the proof by using the fact that the cross-terms, ${F_{{\gamma
_k}}}({\gamma _{th}}){F_{{\gamma _l}}}({\gamma _{th}})$ with $k \neq
l$ in \eqref{OP_exact_1}, can be neglected compared to ${F_{{\gamma _k}}}({\gamma _{th}})$
for values of interest. As a result, \eqref{OP_exact_1} can be approximated as 
\begin{eqnarray}
\label{OP_approx_1}
{\rm{OP}} \approx \sum\limits_{k = 1}^K {{F_{{\gamma _k}}}({\gamma _{th}})}.
\end{eqnarray}
Plugging \eqref{CDF_gamma_k} into \eqref{OP_approx_1} and making use $x/{(1+x)} \approx x$ for small $x$, we arrive at the desired result. This completes the proof.  
\end{proof}

From \eqref{OP_approx}, it is worth noting that the outage probability at high SNRs is determined by the channel gain ratios between the data and interference channels rather than the average channel powers. To gain further insights, we prove the following theorem.
\begin{theorem}
\label{theo:2}
The system diversity order and coding gain are $G_\text{d}=1$ and $G_\text{c} = \left({\gamma _{th}}\sum\limits_{k = 1}^K {\frac{{{\lambda _{I,k}}}}{{{\lambda _{D,k}}}}}\right)^{-1}$.
\end{theorem}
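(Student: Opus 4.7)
The plan is to read off the diversity order and coding gain directly from the high-SNR outage expression already established in Theorem~1. Recall the standard asymptotic parametrization
\begin{align}
\mathrm{OP} \approx \left(G_\text{c}\cdot \mathrm{SNR}\right)^{-G_\text{d}},
\end{align}
where here the operating SNR is $\mathrm{SNR}={{\cal I}_p}/{{\cal N}_0}$, since that is the ratio that scales the per-hop statistic $\gamma_k$ through $\alpha_k$.

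First I would rewrite \eqref{OP_approx} as
\begin{align}
\mathrm{OP} \to \left[\gamma_{th}\sum_{k=1}^{K}\frac{\lambda_{I,k}}{\lambda_{D,k}}\right]\cdot\left(\frac{{\cal I}_p}{{\cal N}_0}\right)^{-1}.
\end{align}
Matching the exponent of ${{\cal I}_p}/{{\cal N}_0}$ on both sides of the standard form immediately forces $G_\text{d}=1$. Matching the leading constant then gives
\begin{align}
G_\text{c} = \left(\gamma_{th}\sum_{k=1}^{K}\frac{\lambda_{I,k}}{\lambda_{D,k}}\right)^{-1},
\end{align}
which is exactly the claim.

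There is essentially no hard step in the proof once Theorem~1 is in hand; the only subtlety worth spelling out is the justification that ${{\cal I}_p}/{{\cal N}_0}$ is the appropriate SNR variable with respect to which the asymptotic order should be measured. I would briefly argue this by noting that $\alpha_k$ scales linearly in ${{\cal I}_p}/{{\cal N}_0}$ while the channel-gain ratios $\lambda_{D,k}/\lambda_{I,k}$ are fixed characteristics of the topology, so taking ${{\cal I}_p}/{{\cal N}_0}\to\infty$ (with all other parameters held constant) is the natural limit for a cognitive underlay system. With that convention fixed, the identification is unambiguous and the theorem follows without further calculation.
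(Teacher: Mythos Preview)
Your proposal is correct and follows essentially the same route as the paper: the paper simply inspects the high-SNR expression from Theorem~1 and identifies $G_\text{d}$ and $G_\text{c}$ by matching against the standard form ${\rm OP}\to(G_\text{c}\,{\cal I}_p/{\cal N}_0)^{-G_\text{d}}$. Your write-up is slightly more explicit about the matching and about why ${\cal I}_p/{\cal N}_0$ is the right SNR variable, but the argument is otherwise identical.
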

\begin{proof}
Observing \eqref{OP_approx}, it is obvious to see that the system diversity gain is one and the coding gain is $\left({\gamma _{th}}\sum\limits_{k = 1}^K {\frac{{{\lambda _{I,k}}}}{{{\lambda _{D,k}}}}}\right)^{-1}$ according to ${\rm OP} \to (G_\text{c} {\mathcal{I}}/{\mathcal{N}}_0)^{-G_\text{d}}$ \cite{Wang2003}.
\end{proof}

From Theorem~\ref{theo:2}, it is worth noting that similar to conventional multihop networks, the system diversity order is always one regardless of number of hops and the increase of hops results in an increase of the system coding gain. 

\subsection{Bit Error Rate}
In this section, we study the most generalized scenario in a multihop
network where all the single-hops in the route have the different
statistical behavior\footnote{Other scenarios including independent and identical distributed (i.i.d.) case are a special case of the network under consideration.}, i.e., all
the links are i.n.d. with different average channel power,
$\bar{\gamma}_p \!\neq\! \bar{\gamma}_q$. Taking into account the fact
that a wrong bit transmission from node $p$ to node $q$ ($q>p$) is
equivalent to an odd number of wrong single-hop bit transmission
between both nodes and employing the recursive error relation, we
have the exact e2e BER of the system as \cite{Morgado2010}
\begin{eqnarray}
\label{BER_e2e} {\overline {{\mathop{\rm BER}\nolimits} } _{e2e}} =
\sum\limits_{p = 1}^K {{{\overline {\text{BER}} }_p}\prod\limits_{q
= p + 1}^K {\left( {1 - 2{{\overline {\text{BER}}}_q}} \right)} },
\end{eqnarray}
where ${\overline {\text{BER}} _p}$ denotes the average BER for
square $M$-ary quadrature amplitude ($M$-QAM) modulation ($M =
{4^m},m = 1,2, \cdots$) in hop $p$ and is given by 
\begin{eqnarray}
\label{BERp}
{\overline {{\mathop{\rm BER}\nolimits} } _p} = \int\limits_0^\infty 
{\rm BER}_{\rm AWGN} f_{\gamma_p}(\gamma) d\gamma.
\end{eqnarray}
In the above equation, ${\rm BER}_{\rm AWGN}$ is the instantaneous BER of hop $p$, namely \cite{Cho2002}
\begin{eqnarray}
\label{BER_agwn}
{\rm BER}_{\rm AWGN} = \frac{{\sum\limits_{j = 1}^{{{\log }_2}\sqrt M } {\sum\limits_{n =
0}^{{\upsilon _j}} {\phi _n^j} } \,{\rm{erfc}}\left( {\sqrt {{\omega
_n}\gamma } } \right)}}{{\sqrt M {{\log }_2}\sqrt M }}, 
\end{eqnarray}
where $\upsilon _j\!=\!(1\!-\!{2^{\!-\!j}})\sqrt M  \!-\! 1$, ${\omega _n} \!=\! \frac{{{{(2n \!+\! 1)}^2}3{{\log }_2}M}}{{2M \!-\! 2}}$, and $\phi_n^j \!=\!
{( - 1)^{\left\lfloor {\frac{{n{2^{j \!-\! 1}}}}{{\sqrt M }}}
\right\rfloor }}\!\!\!\left( {{2^{j \!-\!1}} \!-\!\! \left\lfloor {\frac{{n{2^{j \!-\!
1}}}}{{\sqrt M }} \!+\! \frac{1}{2}} \right\rfloor } \right)$. Here, $\left\lfloor . \right\rfloor$ and ${\rm{erfc}}(x) = \frac{2}{{\sqrt \pi  }}\int\limits_x^\infty  {{e^{ - {t^2}}}dt}$ are defined as the floor and complementary error function, respectively. Substituting \eqref{PDF_gamma_k} and \eqref{BER_agwn} into \eqref{BERp} and swapping integration and summation order, we have\footnote{It should be noted that in this paper we only consider square $M$QAM; however the employed approach could be easily extended for other modulation schemes such as $M$PSK, $M$PAM and rectangular $M$QAM.}
\begin{eqnarray}
\label{BERp_def}
{\overline {{\mathop{\rm BER}\nolimits} } _p}\, = \sum\limits_{j = 1}^{{{\log }_2}\sqrt M } {\sum\limits_{n = 0}^{{\upsilon _j}} {\frac{{\phi _n^j}}{{\sqrt M {{\log }_2}\sqrt M }}{{\cal J}_p}}}, 
\end{eqnarray}
where ${\mathcal{J}}_p$ is defined as follows:
\begin{eqnarray}
{{\cal J}_p} = \int\limits_0^\infty  {{\rm{erfc}}\left( {\sqrt {{\omega _p}\gamma } } \right)\frac{{{\alpha _p}}}{{{{\left( {\gamma  + {\alpha _p}} \right)}^2}}}d\gamma}. 
\end{eqnarray}
Using integration by parts, it is shown that ${\cal J}_p$ is of the form 
\begin{eqnarray}
{{\cal J}_p} = \underbrace {\left. {\frac{{\gamma {\rm{erfc}}\left( {\sqrt {{\omega _p}\gamma } } \right)}}{{\gamma  + {\alpha _p}}}} \right|_{\gamma  = 0}^\infty }_{{{\cal J}_1}} + \frac{{{\omega _p}}}{\pi }\underbrace {\int\limits_0^\infty  {\frac{{\sqrt \gamma  {e^{ - {\omega _p}\gamma }}}}{{\gamma  + {\alpha _p}}}} d\gamma }_{{{\cal J}_2}}.
\end{eqnarray}
To determine ${\cal{J}}_p$, we need to compute ${\cal{J}}_1$ and ${\cal{J}}_2$. For ${\cal{J}}_1$, making use the l'Hopital rule, we have
\begin{eqnarray}
{{\cal J}_1} &=& {\left[ {{\rm{erfc}}\left( {\sqrt {{\omega _p}\gamma } } \right) - \sqrt {\frac{{{\omega _p}\gamma }}{\pi }} {e^{ - {\omega _p}\gamma }}} \right]_{\gamma  = \infty }}\nonumber\\
             &=& 0.
\end{eqnarray}
For ${\mathcal{J}}_2$, introducing a change of variables, namely, $u = \sqrt{\gamma}$, we can rewrite ${\mathcal{J}}_2$ as 
\begin{eqnarray}
{{\cal J}_2} &=& 2\int\limits_0^\infty  {\frac{{{u^2}}}{{{u^2} + b}}} {e^{ - {\omega _p}{u^2}}}du\nonumber\\
             &=& 2\left[ {\int\limits_0^\infty  {{e^{ - {\omega _p}{u^2}}}du}  - \int\limits_0^\infty  {\frac{b}{{{u^2} + b}}} {e^{ - {\omega _p}{u^2}}}du} \right].
\end{eqnarray}
Together with the identities \cite[eq. (3.23.3)]{Gradshteyn2007} and \cite[eq. (7.4.11)]{Abramowitz1972}, it is shown that 
\begin{eqnarray}
\label{J_p}
{{\cal J}_p} = 1 - \sqrt {{\omega _n}{\alpha _k}} {e^{{\omega _n}{\alpha _k}}}\sqrt \pi  {\rm{erfc}}\left( {\sqrt {{\omega _n}{\alpha _k}} } \right).
\end{eqnarray}

Plugging \eqref{J_p} into \eqref{BERp_def}, we achieve the closed-form expression for ${\overline {\text{BER}}_p}$ as
\begin{eqnarray}
\label{BER_k} {\overline {{\mathop{\rm BER}\nolimits} } _p}\!=\!\!
\frac{{\sum\limits_{j = 1}^{{{\log }_2}\sqrt M }\!\!\!{\sum\limits_{n =
0}^{{\upsilon _j}} {\phi _n^j \left[ {1 \!\!-\!\! \sqrt {{\omega _n}{\alpha
_p}} {e^{{\omega _n}{\alpha _p}}}\!\sqrt \pi {\rm{erfc}}\!\left(
{\sqrt {{\omega _n}{\alpha _p}} } \right)} \right]} }}}{{\sqrt M
{{\log }_2}\sqrt M }}.
\end{eqnarray}
Substituting \eqref{BER_k} into \eqref{BER_e2e} yields the average
e2e BER. 

For i.i.d. fading channels, i.e., $\left\{ {{\alpha _p}} \right\}_{p = 1}^K = \alpha$, \eqref{BER_e2e}
simplifies as \eqref{BER_e2e_iid} shown at the top of the next page.
\begin{figure*}[!t]
\begin{eqnarray}
\label{BER_e2e_iid}
\overline{\text{BER}}_{e2e} = \frac{1}{2}\left[ {1 - {{\left( {1 - \frac{2}{{\sqrt M {{\log }_2}\sqrt M }}\sum\limits_{j = 1}^{{{\log }_2}\sqrt M } {\sum\limits_{n = 0}^{{\upsilon _j}} {\phi _n^j\left( {1 - \sqrt {{\omega _n}\alpha } {e^{{\omega _n}\alpha }}\sqrt \pi  {\rm{erfc}}\left( {\sqrt {{\omega _n}\alpha } } \right)} \right)} } } \right)}^K}} \right]
\end{eqnarray}
\setlength{\arraycolsep}{1pt}
\hrulefill \setlength{\arraycolsep}{0.0em}
\vspace*{1pt}
\end{figure*}
\begin{theorem}
At high SNR regime, the end-to-end BER of cognitive underlay multihop DF relaying networks operating in Rayleigh fading channels is approximated as 
\begin{eqnarray}
\label{BER_e2e_approx}
\overline{\rm BER}_{\rm{e2e}} \to \left\{ {\begin{array}{*{20}{c}}
{\frac{a}{{2b}}\sum\limits_{p = 1}^K {\frac{1}{{{\alpha _p}}},} }&{{\rm{i}}{\rm{.n}}{\rm{.d}}{\rm{.}}\,{\rm{channels}}}\\
{\frac{{Ka}}{{2b\alpha }},}&{{\rm{i}}{\rm{.i}}{\rm{.d}}{\rm{.}}\,{\rm{channels}}}
\end{array}} \right.,
\end{eqnarray}
where $a = \frac{{\sqrt M  - 1}}{{\sqrt M {{\log }_2}\sqrt M }}$ and $b = \frac{{3{{\log }_2}M}}{{2(M - 1)}}$.
\end{theorem}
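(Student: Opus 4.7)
I would proceed in two stages: first obtain the high-SNR approximation of the per-hop BER $\overline{\text{BER}}_p$ starting from the closed form in \eqref{BER_k}, and then show that the end-to-end expression \eqref{BER_e2e} collapses to $\sum_{p=1}^{K}\overline{\text{BER}}_p$ in the large-$\alpha$ limit. The second reduction is easy once the first is in hand: once each $\overline{\text{BER}}_p = O(1/\alpha_p)$, the inner products $\prod_{q=p+1}^{K}(1-2\,\overline{\text{BER}}_q)$ equal $1 + O(1/\alpha)$, so replacing them by $1$ perturbs the leading order only at $O(1/\alpha^{2})$, which is subdominant. The i.i.d. case then drops out by setting $\alpha_p = \alpha$ for all $p$.

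For the per-hop step, the analytic workhorse is the large-argument expansion
\[
e^{x^{2}}\operatorname{erfc}(x) \;=\; \frac{1}{\sqrt{\pi}\,x}\Bigl(1 - \frac{1}{2x^{2}} + O(x^{-4})\Bigr),
\]
applied with $x=\sqrt{\omega_n\alpha_p}$. Substituted into \eqref{J_p}, the leading $1$ cancels and one obtains $\mathcal{J}_p = \frac{1}{2\,\omega_n\alpha_p} + O(\alpha_p^{-2})$. To reach the compact constant $a/(2b)$ announced in \eqref{BER_e2e_approx}, I would invoke the standard nearest-neighbor simplification of the AWGN BER \eqref{BER_agwn}: at high SNR only the $n=0$ term contributes appreciably because $\operatorname{erfc}(\sqrt{\omega_n\gamma})$ is suppressed by $(2n+1)^{2}$. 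Since $\phi_{0}^{j}=2^{j-1}$ and $\omega_{0}=b$, and since $\sum_{j=1}^{\log_2\sqrt{M}} 2^{j-1} = \sqrt{M}-1$, this collapses the AWGN BER to $a\,\operatorname{erfc}(\sqrt{b\gamma})$ with $a=(\sqrt{M}-1)/(\sqrt{M}\log_2\sqrt{M})$. Averaging against $f_{\gamma_p}$ and using the $\mathcal{J}_p$ asymptote at $\omega_n=b$ then gives $\overline{\text{BER}}_p \to a/(2b\alpha_p)$, and summing over hops produces the i.n.d.\ result.

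The delicate point, and the main thing I would highlight, is the legitimacy of the nearest-neighbor truncation. Unlike in AWGN, where the higher-$n$ terms are exponentially suppressed, after averaging against the heavy-tailed density $\alpha_p/(\gamma+\alpha_p)^{2}$ each higher-$n$ contribution also decays as $1/\alpha_p$ rather than faster. Consequently \eqref{BER_e2e_approx} is not a strict leading-order asymptotic equivalence of \eqref{BER_e2e}; it is the high-SNR \emph{approximation} inherited from keeping only the dominant tone in the AWGN BER, which is the convention behind the ``$\to$'' in the statement and behind the cited M-QAM BER references. I would state this explicitly in the proof so that the gap between the raw $\alpha_p\to\infty$ limit of the double sum in \eqref{BER_k} and the closed-form expression \eqref{BER_e2e_approx} is transparent.
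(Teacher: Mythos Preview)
Your proposal is correct and follows essentially the same route as the paper: reduce the end-to-end expression to $\sum_p \overline{\text{BER}}_p$ via $\prod_q(1-2\overline{\text{BER}}_q)\to 1$, truncate the $M$-QAM AWGN BER to its dominant $a\,\operatorname{erfc}(\sqrt{b\gamma})$ term (the paper cites \cite[eq.~(18)]{Cho2002} for this), evaluate the resulting integral as in \eqref{J_p}, and apply the large-argument asymptotic of $e^{x}\operatorname{erfc}(\sqrt{x})$ to extract $a/(2b\alpha_p)$. The only cosmetic difference is ordering---the paper does the end-to-end reduction first and the per-hop asymptotics second, whereas you reverse this---and your explicit caveat that the nearest-neighbor truncation is an approximation convention rather than a strict asymptotic equivalence is a clarification the paper does not spell out.
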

\begin{proof}
Observing \eqref{BER_e2e} and using the fact that at high SNR regime the product term, $\prod\limits_{q = p + 1}^K {\left( {1 - 2{{\overline{\rm BER} }_q}} \right)}$, approaches to one. We are able to approximate the end-to-end BER as 
\begin{eqnarray}
{\overline {{\mathop{\rm BER}\nolimits} } _{e2e}} 
        &=& \sum\limits_{p = 1}^K {{{\overline {\rm BER} }_p}\underbrace {\prod\limits_{q = p + 1}^K {\left( {1 - 2{{\overline {\rm BER} }_q}} \right)} }_{ \to 1}} \nonumber\\
       &\approx& \sum\limits_{p = 1}^K {{{\overline {\rm BER} }_p}}. 
\end{eqnarray}
By neglecting some of the higher order terms in \eqref{BER_agwn},  ${\overline{\rm BER}}_p$ is expressed as \cite[eq. (18)]{Cho2002}
\begin{eqnarray}
{\overline {\rm BER} _p} = \int\limits_0^\infty a\;{\mathop{\rm erfc}\nolimits} \left( {\sqrt {b\gamma } } \right)\frac{{{\alpha _p}}}{{{{(\gamma  + {\alpha _p})}^2}}}d\gamma,
\end{eqnarray}
where $a = \frac{{\sqrt M  - 1}}{{\sqrt M {{\log }_2}\sqrt M }}$ and $b = \frac{{3{{\log }_2}M}}{{2(M - 1)}}$.
Employing the same steps as for \eqref{BER_k} and then making use the infinite series representation for the error function, i.e., ${\mathop{\rm erfc}\nolimits} (\sqrt x ) \approx \frac{{{e^{ - x}}}}{{\sqrt {\pi x} }}\left( {1 - \frac{1}{{2x}}} \right)$ for large $x$ \cite{Tellambura2000}, we can have the desired results as in \eqref{BER_e2e_approx}.

For i.i.d. case, it follows immediately from the result of the i.n.d. case by letting $\alpha  = \left\{ {{\alpha _p}} \right\}_{p = 1}^K$.
\end{proof}
\subsection{Ergodic Capacity}
Besides outage and bit error probability, the ergodic capacity is
another important performance measure, defined as the expected
value of the instantaneous mutual information between the cognitive
source and the cognitive destination. The ergodic capacity $\mathcal{C}$ (in
bits/second) per unit bandwidth can be expressed as
\begin{eqnarray}
\label{Capacity_def}
{\mathcal{C}}
\!=\! \frac{1}{K}\int\nolimits_0^{
\infty} {{{\log }_2}(1 + \gamma ){f_{{\gamma _{e2e}}}}(\gamma
)d\gamma},
\end{eqnarray}
where $\gamma _{e2e}$ is the equivalent instantaneous e2e SNR
for the multi-hop CR network. To derive the ergodic capacity, we
first need an expression for the PDF of $\gamma_{e2e}$. However,
with regenerative relaying, an exact closed-form expression for the PDF is not mathematically viable. For mathematics tractability, we use the approximation approach. According to \cite{Wang2007},
regardless of the modulation scheme used, $\gamma_{e2e}$ can be tightly approximated as 
\begin{eqnarray}
\gamma_{e2e} \approx{\tilde \gamma}_{e2e} = \min_{k=1,\ldots,K}\!\gamma_k.
\end{eqnarray} 
Having been widely adopted in the performance studies of DF relay networks (see, e.g., \cite{Trung2009,Alexandropoulos2010}), the advantage of this analytical approach is able to provide a mathematics tractable form for the CDF and PDF of the end-to-end SNR. Consequently, the PDF of ${\tilde \gamma}_{e2e}$ is given by
\begin{eqnarray}
\label{PDF_gamma_e2e_def} {f_{{\tilde\gamma _{e2e}}}}(\gamma ) 
     &=&\frac{{d{F_{{\tilde\gamma _{e2e}}}}(\gamma )}}{{d\gamma }} \nonumber\\
     &=&\sum\limits_{k = 1}^K {{f_{{\gamma _k}}}(\gamma )\!\!\!\prod\limits_{n=1,
n \ne k}^K \!\!\!{\left( {1 - {F_{{\gamma _n}}}(\gamma )} \right)} },
\end{eqnarray}
where ${F_{{\tilde\gamma _{e2e}}}}(\gamma ) = 1 - \prod\limits_{k =
1}^K {\left[ {1 - {F_{{\gamma _k}}}({\gamma})} \right]}$.
Substituting \eqref{PDF_gamma_k} and \eqref{CDF_gamma_k} into
\eqref{PDF_gamma_e2e_def}, we get
\begin{eqnarray}
\label{PDF_gamma_e2e_1} {f_{{{\tilde \gamma}_{e2e}}}}(\gamma ) =
\sum\limits_{k = 1}^K {\frac{{\prod\nolimits_{n = 1}^K {{\alpha _n}}
}}{{\left( {\gamma  + {\alpha _k}} \right)\prod\nolimits_{n = 1}^K
{\left( {\gamma  + {\alpha _n}} \right)} }}}.
\end{eqnarray}
With the current form of \eqref{PDF_gamma_e2e_1}, it is very difficult to obtain the closed-form expression for the end-to-end capacity. To facilitate the analysis, we sort and renumber
$\alpha_k$ in the ascending order as ${\alpha _1} =  \cdots  =
{\alpha _{{r_1}}} = {\beta _1} < \, \cdots  < {\alpha _{{r_1} +
{r_2} + \cdots  + {r_{N - 1}} + 1}} =  \cdots  = {\alpha _{{r_1} +
{r_2} + \cdots  + {r_N}}} = {\beta _N}$ and  $\sum\nolimits_{n = 1}^N
{{r_n}}  = K$ with $r_k$ being a positive integer. Stated another way, $\beta_1, \cdots, \beta_N$ are distinct elements of $\alpha_1,\cdots,\alpha_K$. Using the
partial-fraction expansion, \eqref{PDF_gamma_e2e_1} can be rewritten as
\begin{eqnarray}
\label{PDF_gamma_e2e}
{f_{{{\tilde \gamma}_{e2e}}}}(\gamma ) = \prod\nolimits_{k = 1}^K
{{\alpha _k}} \left( {\sum\limits_{n = 1}^N {\sum\limits_{l =
1}^{{r_n}} {\frac{{{A_{n,l}}}}{{{{(\gamma  + {\beta _n})}^{l +
1}}}}} } } \right),
\end{eqnarray}
where $A_{n,l}$ is the coefficient of the partial-fraction expansion determined as \cite{Amari1997,Khuong2006}\footnote{For convenience, the coefficients $A_{n,l}$ can be obtained more easily by solving the system of $K$ equations, which is established by randomly choosing $K$ distinct values of $\gamma$ but not equal to any $\beta_n$ \cite{Khuong2006}. Let us denote $K$ chosen values of $\gamma$ as $B_u$ with $u = 1,\cdots,K$, we can obtain a linear system of equations as
\begin{eqnarray}
\sum\limits_{n = 1}^N {\sum\limits_{l = 1}^{{r_n}} {\frac{{{A_{n,l}}}}{{{{(\gamma  + {\beta _n})}^{l + 1}}}}} }  = \sum\limits_{k = 1}^K {\frac{1}{{\left( {\gamma  + {\alpha _k}} \right)\prod\nolimits_{n = 1}^K {\left( {\gamma  + {\alpha _n}} \right)} }}}
\end{eqnarray}
where $\textbf{A} = {[\begin{array}{*{20}{c}}
{A_{1,1}}& \cdots &{\,A_{n,l}}& \cdots &{A_{N,{r_n}}}
\end{array}]^T}$ is obtained by ${\bf{A}} = {\bf{C}}^{-1}{\bf{D}}$, where ${[.]^T}$ denotes the transpose operator; ${\bf{C}}$ is a $K \times K$ matrix, whose entries are ${C_{u,v}} = \frac{1}{{{{({B_u} + {\beta _p})}^{q+1}}}}$ with $v = q + \sum\limits_{m = 1}^{p - 1} {{r_m}}$; ${\bf{D}} = [\begin{array}{*{20}{c}}{{D_1}}& \cdots &{{D_u}}& \cdots &{{D_{K}}{]^T}}
\end{array}$ with ${D_u} = \prod\nolimits_{n = 1}^K {\frac{1}{{\left( {{B_u} + {\alpha _n}} \right)}}} \sum\limits_{k = 1}^K {\frac{1}{{\left( {{B_u} + {\alpha _k}} \right)}}} $ and $u,v=1,\cdots,K$.}
\begin{eqnarray}
{A_{n,l}} = \frac{1}{{({r_n} - l)!}}{\left. {\left\{
{\frac{{{\partial ^{({r_n} - l)}}}}{{\partial {s^{({r_n} -
l)}}}}[{{(\gamma  + {\beta _n})}^{{r_n}}}{f_{{{\tilde
\gamma}_{e2e}}}}(\gamma )]} \right\}} \right|_{\gamma  =  - {\beta
_n}}}.
\end{eqnarray}
Plugging \eqref{PDF_gamma_e2e} in
\eqref{Capacity_def}, we get
\begin{eqnarray}
\label{eq:C_e2e_ind} 
{\cal C} \approx \frac{{\prod\nolimits_{k = 1}^K
{{\alpha _k}} }}{{K}}\sum\limits_{n = 1}^N {\sum\limits_{l =
1}^{{r_n}} A_{n,l} } I_l(\beta_n),\end{eqnarray}
where $I_l(\beta_n)$ with $l \ge 1$ is the auxiliary function defined as 
\begin{align}
{I_l}(\beta_n) = \int\nolimits_0^\infty  {\frac{{{\log_2}(1 + \gamma)}}{{{{(\beta_n + \gamma )}^{l+1}}}}} d\gamma. 
\end{align}
Using integration by parts, we have
\begin{align}
\label{I_n_0}
{I_l}({\beta _n}) =& \underbrace {\left[ { - \frac{{{{\log }_2}(1 + \gamma )}}{{l{{(\gamma  + {\beta _n})}^l}}}} \right]_{\gamma  = 0}^\infty }_{ \to 0}\nonumber\\
                   &+ \frac{1}{{l\ln 2}}\int_0^\infty  {\frac{{d\gamma }}{{(1 + \gamma ){{(\gamma  + {\beta _n})}^l}}}}.
\end{align}
When $l$ is an integer, after using partial fraction expansion, we have \eqref{I_l_1} as shown at the top of the next page.
\begin{figure*}[!t]
\begin{align}
\label{I_l_1}
{I_l}({\beta _n}) = \frac{1}{{l\ln 2}}\int\limits_0^\infty  {\left[ {\frac{1}{{{{({\beta _n} - 1)}^l}(\gamma  + 1)}} - \sum\limits_{k = 1}^l {\frac{1}{{{{({\beta _n} - 1)}^k}{{(\gamma  + {\beta _n})}^{l + 1 - k}}}}} } \right]} d\gamma 
\end{align}
\setlength{\arraycolsep}{1pt}
\hrulefill \setlength{\arraycolsep}{0.0em}
\vspace*{1pt}
\end{figure*}
Note that the integral in \eqref{I_l_1} is not converging due to the first term. To deal with the problem, by appropriate rearrangements and then performing the integrations, we obtain the final closed-form expression for $I_l(\beta_n)$ as \eqref{I_l_2}.

\begin{figure*}[!t]
\begin{align}
\label{I_l_2} 
{I_l}({\beta _n}) &= \frac{1}{{l\ln 2}}\left[ {\frac{1}{{{{({\beta _n} - 1)}^l}}}\int\limits_0^\infty  {\left( {\frac{1}{{\gamma  + 1}} - \frac{1}{{\gamma  + {\beta _n}}}} \right)} d\gamma  - \sum\limits_{k = 1}^{l - 1} {\frac{1}{{{{({\beta _n} - 1)}^k}}}\int\limits_0^\infty  {\frac{{d\gamma }}{{{{(\gamma  + {\beta _n})}^{l + 1 - k}}}}} } } \right]\nonumber\\
                  &= \frac{1}{{l\ln 2}}\left[ {\frac{{\log {\beta _n}}}{{{{({\beta _n} - 1)}^l}}} - \sum\limits_{k = 1}^{l - 1} {\frac{1}{{{{({\beta _n} - 1)}^k}(l - k)\beta _n^{l - k}}}} } \right]
\end{align}
\setlength{\arraycolsep}{1pt}
\hrulefill \setlength{\arraycolsep}{0.0em}
\vspace*{1pt}
\end{figure*} 
For the special case of $\beta_n=1$, \eqref{I_l_2} simplifies to
\begin{eqnarray}
\label{I_l_3}
{I_l}(\beta_n) &=& \frac{1}{{l\ln 2}}\int\nolimits_0^\infty  {\frac{{d\gamma
}}{{{{(\gamma  + 1)}^{l + 1}}}}},\nonumber\\
               &=& \frac{1}{{{l^2}\ln 2}}.
\end{eqnarray}
For i.i.d. case, from \eqref{PDF_gamma_e2e_def}, we have
\begin{align}
\label{eq:PDF_gamma_e2e_iid}
{f_{{{\tilde \gamma }_{e2e}}}}(\gamma ) &= K{\left[ {1 - {F_{{\gamma _k}}}(\gamma )} \right]^{K - 1}}{f_{{\gamma _k}}}(\gamma )\nonumber\\
                                        &= \frac{{K{\alpha ^K}}}{{{{(\gamma  + \alpha )}^{K + 1}}}}.
\end{align}
Combining \eqref{eq:PDF_gamma_e2e_iid} and \eqref{Capacity_def}, we have the e2e ergodic capacity for this case as
\begin{align}
\label{eq:C_e2e_iid}
{\cal C} &= {\alpha ^K}\int\limits_0^\infty  {\frac{{{{\log }_2}(1 + \gamma )}}{{{{(\gamma  + \alpha )}^{K + 1}}}}} \nonumber\\
         &= {\alpha ^K}{I_K}(\alpha ).
\end{align}
Plugging \eqref{I_l_2} (or \eqref{I_l_3}) into \eqref{eq:C_e2e_ind} (or \eqref{eq:C_e2e_iid}), we have the closed-form, integral-free, expression for the system capacity. It is worth noting that our suggested method is precise and tractable with the determination of the appropriate parameters being done straightforwardly. Additionally, as $\mathcal{C}$ is given in a closed-form fashion, its evaluation is instantaneous regardless of the number of hops and the value of the maximum interference temperature.

It is of interest to compare the ergodic capacity of underlay DF and AF multihop networks. The following theorem is provided to answer such the question. 
\begin{theorem}
For the same network and channel settings of underlay multihop networks, DF relaying provides slightly better ergodic capacity than its AF counterpart. 
\end{theorem}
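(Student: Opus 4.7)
The plan is to compare the end-to-end SNRs of DF and AF multihop relaying pointwise (realization by realization), then invoke the monotonicity of $\log_2(1+\cdot)$ together with the monotonicity of expectation. Since both DF and AF in an underlay $K$-hop setting consume the same $K$ time/frequency slots, the $1/K$ pre-log factor is identical, so the capacity ordering reduces to an ordering of $\mathbb{E}\{\log_2(1+\gamma_{e2e})\}$.

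The core algebraic step is to show that, under the same per-hop SNRs $\gamma_1,\ldots,\gamma_K$, the exact AF end-to-end SNR
\begin{align}
\gamma_{e2e}^{\mathrm{AF}} = \left[\prod_{k=1}^{K}\left(1+\frac{1}{\gamma_k}\right)-1\right]^{-1}
\end{align}
satisfies $\gamma_{e2e}^{\mathrm{AF}} \le \min_{k}\gamma_k = \tilde{\gamma}_{e2e}^{\mathrm{DF}}$, where the right-hand side is exactly the tight DF approximation used earlier in this subsection. The argument is short: each factor $(1+1/\gamma_k)\ge 1$, so
\begin{align}
1+\frac{1}{\gamma_{e2e}^{\mathrm{AF}}} = \prod_{k=1}^{K}\left(1+\frac{1}{\gamma_k}\right) \ge 1+\max_{k}\frac{1}{\gamma_k} = 1+\frac{1}{\min_k \gamma_k},
\end{align}
which rearranges to the claimed inequality. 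Thus the DF per-realization SNR dominates the AF per-realization SNR for every channel outcome.

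Next, because $\log_2(1+x)$ is monotonically increasing in $x$, the inequality passes through the logarithm; taking expectations with respect to the joint distribution of $(\gamma_1,\ldots,\gamma_K)$, which is common to both protocols under identical network and channel settings, yields $\mathcal{C}_{\mathrm{DF}} \ge \mathcal{C}_{\mathrm{AF}}$ after dividing by $K$. The quantitative qualifier ``slightly'' follows from the observation that the two SNRs essentially coincide whenever one hop is much weaker than the others: if $\gamma_{k^\star} \ll \gamma_k$ for all $k\ne k^\star$, then $\prod_{k}(1+1/\gamma_k) \approx 1+1/\gamma_{k^\star}$, giving $\gamma_{e2e}^{\mathrm{AF}}\approx \gamma_{k^\star}=\min_k\gamma_k$. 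Since under i.n.d.\ Rayleigh fading (and especially in the interference-limited regime analyzed here, where $\gamma_k=(\mathcal{I}_p/\mathcal{N}_0)\,|h_{D,k}|^2/|h_{I,k}|^2$ has a heavy-tailed ratio distribution), one of the hops is overwhelmingly dominant in a typical realization, the gap shrinks rapidly.

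The main obstacle is making ``slightly'' rigorous rather than merely heuristic. A fully quantitative bound would require estimating $\mathbb{E}\{\log_2(\prod_k(1+1/\gamma_k))-\log_2(1+1/\min_k\gamma_k)\}$, which involves the joint distribution of the ordered statistics of the $\gamma_k$'s from \eqref{PDF_gamma_k}. Rather than derive a sharp constant, I would settle the ordering $\mathcal{C}_{\mathrm{DF}}\ge\mathcal{C}_{\mathrm{AF}}$ analytically as above and defer the size of the gap to numerical evidence in Section~\ref{NumericalResults}, where the tightness of the DF approximation $\tilde{\gamma}_{e2e}^{\mathrm{DF}}=\min_k\gamma_k$ has already been validated and the AF curve can be plotted for direct visual comparison.
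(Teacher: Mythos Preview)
Your argument is correct, but it proceeds differently from the paper. You compare the two protocols \emph{pointwise}: you write down the exact AF end-to-end SNR $\gamma_{e2e}^{\mathrm{AF}}=[\prod_k(1+1/\gamma_k)-1]^{-1}$ and show by the one-line product bound that $\gamma_{e2e}^{\mathrm{AF}}\le\min_k\gamma_k$ for every realization, then push this through $\log_2(1+\cdot)$ and expectation. The paper instead works at the level of averages: it invokes the cut-set (min-cut max-flow) upper bound $\mathcal{C}_{\mathrm{DF}}\le\min_k c_k$, then argues via a Jensen-type step that $\mathcal{C}_{\mathrm{DF}}$ exceeds $\tfrac{1}{K}\mathbb{E}\{\log_2[1+\min_k\gamma_k]\}$, and finally imports the inequality $\mathcal{C}_{\mathrm{AF}}<\tfrac{1}{K}\mathbb{E}\{\log_2[1+\min_k\gamma_k]\}$ from Hasna--Alouini \cite{Hasna2004} rather than deriving it. Your route is more elementary and entirely self-contained (you essentially reprove the Hasna--Alouini bound in one line), at the cost of identifying $\mathcal{C}_{\mathrm{DF}}$ with the $\min$-SNR surrogate $\tilde\gamma_{e2e}$ already adopted in \eqref{Capacity_def}--\eqref{PDF_gamma_e2e}; the paper's route aims to keep the \emph{true} DF capacity in the chain and sandwich both protocols around the same $\min$-SNR expression. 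Both end at the same intermediate quantity, and your honest deferral of the ``slightly'' qualifier to numerics matches what the paper ultimately does as well.
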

\begin{proof}
Denote ${{\cal C}_{{\rm{DF}}}}$ and ${{\cal C}_{{\rm{AF}}}}$ be the ergodic capacity for the underlay DF and AF multihop network, respectively. According to the min-cut max-flow theorem \cite{Cover1979}, namely the end-to-end system capacity cannot be larger than the capacity of each hop, the end-to-end ergodic capacity for underlay DF multihop networks can be written as
\begin{eqnarray}
{\cal C}_\text{DF} \le \min ({c_1},{c_2}, \ldots ,{c_K}),
\end{eqnarray}
where $c_k$ with $k=1,\ldots,K$ is the Shannon capacity of hop $k$, given by 
\begin{eqnarray}
\label{C_k}
{c_k} = \frac{1}{K}\int\limits_0^\infty  {{{\log }_2}(1 + {\gamma _k})} {f_{{\gamma _k}}}(\gamma )d\gamma. 
\end{eqnarray}
Making use the Jensen's inequality, we easily see that 
\begin{align}
{{\cal C}_{{\rm{DF}}}} > \frac{1}{K}{{\mathbb{E}}_{{\gamma _1}, \ldots ,{\gamma _K}}}\left\{ {\min \left[ {{{\log }_2}(1 + {\gamma _1}), \ldots ,{{\log }_2}(1 + {\gamma _K})} \right]} \right\}.
\end{align}
Since the binary logarithm is strictly concave, ${{\cal C}_{{\rm{DF}}}}$ can be rewritten as
\begin{align}
\label{C_DF}
{{\cal C}_{{\rm{DF}}}} > \frac{1}{K}{{\mathbb{E}}_{{\gamma _1}, \ldots ,{\gamma _K}}}\left\{ {{{\log }_2}\left[ {1 + \min ({\gamma _1}, \ldots ,{\gamma _K})} \right]} \right\}.
\end{align}
Based on the results reported in \cite{Hasna2004}, i.e.  
\begin{align}
\label{C_AF}
{{\cal C}_{{\rm{AF}}}} < \frac{1}{K}{{\mathbb{E}}_{{\gamma _1}, \ldots ,{\gamma _K}}}\left\{ {{{\log }_2}\left[ {1 + \min ({\gamma _1}, \ldots ,{\gamma _K})} \right]} \right\},
\end{align}
we can complete the proof after making a comparison between \eqref{C_DF} and \eqref{C_AF}.
\end{proof}

Before moving on to the next section, here we would like stress that although DF performs slightly better ergodic capacity, it needs more complicated implementation as compared to AF \cite{Hasna2003}. 

\section{Relay Position Optimization}
\label{Sec:RelayPositionOptimization}

In this section, we focus on the problem of relay position optimization. In particular, for given underlay DF multihop network parameters including the coordinates of the primary receiver, the secondary source, secondary destination, and the number of hops, our problem is to find optimal positions for relays, which makes the system performance (in terms of outage probability or system error probability) minimize. We first provide the solution for the case of outage probability.

\begin{figure}[!h]
\centering
\includegraphics[width=\ScaleIfNeeded]{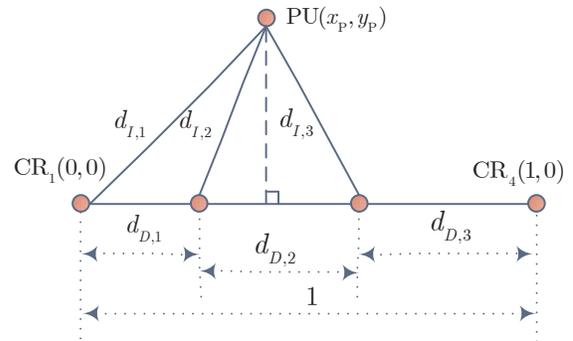}
\caption{Cognitive underlay 3-hop DF relay network in a straight line.} 
\label{SystemModel_opt}
\end{figure}

For simplicity, we consider the network scenario illustrated in Fig.~\ref{SystemModel_opt}, where all secondary nodes are ordered in the sequence and positioned along the straight line connecting the secondary source and the secondary destination. Such a model is mathematically tractable and well-adopted in the literature in studying multihop networks \cite{Oyman2007, Sikora2006}. In addition, it is readily extended to the more generalized case of two dimension (2-D) networks. Interestingly enough, although the linear multihop network model is slightly simplified model of the real world, we can find it in practical, e.g. the communication between cars on a highway, or the communication between road-side units placed along the road \cite{Panichpapiboon2008,Wu2009}.

We further assume that the overall distance between the source and the destination is normalized to one, i.e., 
\begin{align}
\label{eq:nor_dis}
{d_{D,1}} +  \cdots  + {d_{D,K}} = 1,
\end{align}
where $d_{D,k}$ denotes the physical distance of data hop $k$. Under a predetermined position of the primary receiver and a fixed number of hops $K$, the problem of finding the optimal position of the relays that minimizes the system outage probability can be mathematically stated as follows:
\begin{eqnarray}
\label{eq:Opt_Pro_Def}
\begin{array}{l}
\min \,\,\frac{{{\gamma _{th}}}}{{{\textstyle{{{{\cal I}_p}} \over {{{\cal N}_0}}}}}}\sum\limits_{k = 1}^K {\frac{{{\lambda _{I,k}}}}{{{\lambda _{D,k}}}}} \\
{\text{subject to}}\,\left\{ {\begin{array}{*{20}{c}}
{\sum\limits_{k = 1}^K {{d_{D,k}}}  = 1}\\
{{d_{D,k}} > 0,\,\,k = 1, \ldots ,K}
\end{array}} \right..
\end{array}
\end{eqnarray}

Based on a single-slope distance-dependent path loss model for the average channel powers \cite{Stuber2001}, we can write
\begin{align}
\label{eq:dI_dD}
\frac{{{\lambda _{I,k}}}}{{{\lambda _{D,k}}}} = \frac{{{d_{I,k}}^{ - \eta }}}{{{d_{D,k}}^{ - \eta }}}\,\, = {\left( {\frac{{{d_{D,k}}}}{{{d_{I,k}}}}} \right)^\eta },
\end{align}
where $\eta \ge 2$ denotes the path loss exponent. Recalling that $\eta$ typically has value of 2 in free-space environments and up to 5 and 6 in shadowed areas and obstructed in-building scenarios, respectively \cite[Table 4.2]{Rappaport2002}. Combining \eqref{eq:Opt_Pro_Def} and \eqref{eq:dI_dD} and noting that $\gamma_{th}$  and ${\textstyle{{{{\cal I}_p}} \over {{{\cal N}_0}}}}$ are the given constraints, the optimization problem in \eqref{eq:Opt_Pro_Def} can be simplified as 
\begin{align}
\label{eq:Opt_Pro_1}
&\min \,\sum\limits_{k = 1}^K {{{\left( {\frac{{{d_{D,k}}}}{{{d_{I,k}}}}} \right)}^\eta }} \nonumber\\
&{\rm{subject}}\,{\rm{to}}\left\{ {\begin{array}{*{20}{l}}
{\sum\limits_{k = 1}^K {{d_{D,k}}}  = 1,}\\
{{d_{D,k}} > 0,\,\,k = 1, \ldots ,K}
\end{array}} \right..
\end{align}
\begin{theorem}
\label{theo:5}
For a given coordinate of primary receiver $({x_{\rm{P}}},{y_{\rm{P}}})$, the secondary network outage probability achieves its minimum at 
\begin{align}
{\rm{O}}{{\rm{P}}_{{\rm{min}}}} = \frac{{{\gamma _{th}}}}{{{\textstyle{{{{\cal I}_p}} \over {{{\cal N}_0}}}}}}K{\left( {\prod\limits_{k = 1}^K {\frac{{d_{D,k}^*}}{{d_{I,k}^*}}} } \right)^{\frac{\eta }{K}}},
\end{align}
where $d_{D,k}^*$ with $k = 1, \ldots ,K$ are the optimal distance of hop $k$, being the roots of the nonlinear system of $K$ equations as follows:
\begin{align}
\left\{ {\begin{array}{*{20}{c}}
{{d_{D,1}} +  \cdots  + {d_{D,K}} - 1 = 0}\\
{{{({x_{\rm{P}}} - {d_{D,1}})}^2} - \left( {{x_{\rm{P}}}^2 + {y_{\rm{P}}}^2} \right){{\left( {\frac{{{d_{D,2}}}}{{{d_{D,1}}}}} \right)}^2} + {y_{\rm{P}}}^2 = 0}\\
 \vdots \\
{{{\left({x_{\rm{P}}} - \sum\limits_{k = 1}^{K - 1} {{d_{D,k}}} \right)}^2} - \left( {{x_{\rm{P}}}^2 + {y_{\rm{P}}}^2} \right){{\left( {\frac{{{d_{D,K}}}}{{{d_{D,1}}}}} \right)}^2} + {y_{\rm{P}}}^2 = 0}
\end{array}} \right..
\end{align}
And $d_{I,k}^*$ with $k = 2, \ldots ,K$ are determined by using the relationship
\begin{align}
\frac{{d_{D,1}^*}}{{d_{I,1}^*}} =  \cdots  = \frac{{d_{D,k}^*}}{{d_{I,k}^*}} =  \cdots  = \frac{{d_{D,K}^*}}{{d_{I,K}^*}}.
\end{align}
\end{theorem}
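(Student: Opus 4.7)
The plan is to reduce the constrained outage minimization to a balanced-SNR geometric problem via AM--GM, and then translate the equality condition into the explicit system of Pythagoras equations stated in the theorem.

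First I would eliminate the fading parameters from the objective. Substituting the path-loss relation \eqref{eq:dI_dD} into the outage problem \eqref{eq:Opt_Pro_Def} turns it into the purely geometric program \eqref{eq:Opt_Pro_1}: minimize $\sum_{k=1}^K \rho_k^{\eta}$ over $d_{D,k}>0$ with $\sum_{k=1}^K d_{D,k}=1$, where $\rho_k \triangleq d_{D,k}/d_{I,k}$.

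Next, I would apply the AM--GM inequality to the $K$ positive quantities $\rho_k^{\eta}$,
\begin{align}
\sum_{k=1}^K \rho_k^{\eta}\;\ge\; K\!\left(\prod_{k=1}^K \rho_k\right)^{\!\eta/K},\nonumber
\end{align}
with equality if and only if $\rho_1=\cdots=\rho_K$. Multiplying both sides by $\gamma_{th}/(\mathcal{I}_p/\mathcal{N}_0)$ yields precisely the advertised expression for $\mathrm{OP}_{\min}$, so what remains is to show that the equality (i.e.\ SNR-balancing) configuration is feasible and to pin it down in closed geometric form.

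To do so, I would place the source at the origin and the destination at $(1,0)$, so that the $k$-th cognitive transmitter is located at $(\sum_{j<k} d_{D,j},0)$. The Pythagorean theorem gives $d_{I,1}^{2}=x_{\mathrm{P}}^{2}+y_{\mathrm{P}}^{2}$ and $d_{I,k}^{2}=(x_{\mathrm{P}}-\sum_{j<k}d_{D,j})^{2}+y_{\mathrm{P}}^{2}$. Squaring the equality $d_{I,k}/d_{I,1}=d_{D,k}/d_{D,1}$ and substituting these two identities reproduces, term by term, the $K-1$ quadratic equations displayed in the theorem; adjoining the normalization \eqref{eq:nor_dis} closes the system to $K$ equations in the $K$ unknowns $d_{D,k}^{*}$. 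The matching $d_{I,k}^{*}$ are then recovered from the equal-ratio relation $d_{D,1}^{*}/d_{I,1}^{*}=\cdots=d_{D,K}^{*}/d_{I,K}^{*}$ (or equivalently by plugging the $d_{D,k}^{*}$ back into Pythagoras).

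The principal obstacle I foresee is step three: the $K-1$ Pythagorean balancing equations are quadratic in each $d_{D,k}$, so no closed-form root exists for general $K$ and general $(x_{\mathrm{P}},y_{\mathrm{P}})$, which is exactly why the theorem is phrased as an implicit, numerically-solvable characterization. A secondary subtlety is verifying that a balanced configuration with $d_{D,k}^{*}>0$ actually lives in the open simplex $\{\sum d_{D,k}=1,\;d_{D,k}>0\}$; this can be handled by a continuity/intermediate-value argument applied to the single-parameter balancing map (for $K=2$) and iteratively for larger $K$, using the fact that $\rho_k$ varies continuously and monotonically in the preceding hop lengths.
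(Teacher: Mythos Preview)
Your proposal is correct and follows essentially the same route as the paper: apply AM--GM (what the paper calls the Cauchy theorem) to the terms $(d_{D,k}/d_{I,k})^{\eta}$, read off the equal-ratio optimality condition, convert it via the Pythagorean theorem into the displayed nonlinear system, and note that the system must be solved numerically. Your additional remark on feasibility of a positive balanced configuration in the open simplex goes slightly beyond what the paper verifies.
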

\begin{proof}
To solve the above optimization problem, we can use Cauchy theorem\footnote{In some mathematics books, this theorem also is named as the inequality of arithmetic and geometric means.}. In particular, from \eqref{eq:Opt_Pro_1}, we can have 
\begin{align}
\sum\limits_{k = 1}^K {{{\left( {\frac{{{d_{D,k}}}}{{{d_{I,k}}}}} \right)}^\eta }}  \ge K\sqrt[K]{{\prod\limits_{k = 1}^K {{{\left( {\frac{{{d_{D,k}}}}{{{d_{I,k}}}}} \right)}^\eta }} }}.
\end{align}
Equality holds if and only if 
\begin{align}
\label{eq:equ_cond}
\frac{{d_{D,1}^{}}}{{d_{I,1}^{}}} =  \cdots  = \frac{{d_{D,k}^{}}}{{d_{I,k}^{}}} =  \cdots  = \frac{{d_{D,K}^{}}}{{d_{I,K}^{}}}.
\end{align}
Denoting $d_{D,k}^*$  and $d_{I,k}^*$ as the optimal values making the equality occur, the minimized OP is 
\begin{align}
{\rm{O}}{{\rm{P}}_{{\rm{min}}}} = \frac{{{\gamma _{th}}}}{{{\textstyle{{{{\cal I}_p}} \over {{{\cal N}_0}}}}}}K{\left( {\prod\limits_{k = 1}^K {\frac{{d_{D,k}^*}}{{d_{I,k}^*}}} } \right)^{\frac{\eta }{K}}}
\end{align}

To determine  $d_{D,k}^*$  and $d_{I,k}^*$, observing Fig.~\ref{SystemModel_opt} and making use the Pythagorean theorem, we have
\begin{align}
\label{eq:Pytago}
{d_{I,k}}^2 = {y_{\rm{P}}}^2 + {\left( {{x_{\rm{P}}} - \sum\nolimits_{k = 1}^{K - 1} {{d_{D,k}}} } \right)^2}
\end{align}
for $k=2,\ldots,K$. It is obvious for the case of $k=1$ that ${d_{I,1}} = \sqrt {{x_{\rm{P}}}^2 + {y_{\rm{P}}}^2}$. Using \eqref{eq:equ_cond}, \eqref{eq:Pytago} is rewritten as follows: 
\begin{align}
\label{eq:Pytago:2}
{\left( {{x_{\rm{P}}} - \sum\limits_{p = 1}^{k - 1} {{d_{D,k}}} } \right)^2} + {y_{\rm{P}}}^2 = \left( {{x_{\rm{P}}}^2 + {y_{\rm{P}}}^2} \right){\left( {\frac{{{d_{D,k}}}}{{{d_{D,1}}}}} \right)^2}.
\end{align}
Combining \eqref{eq:nor_dis} and \eqref{eq:Pytago:2}, a system of $K$ equations for $d_{D,1},\ldots,d_{D,K}$ is formulated as follows:
\begin{align}
\label{eq:Sys_o_Eq}
\left\{ {\begin{array}{*{20}{c}}
{{d_{D,1}} +  \cdots  + {d_{D,K}} - 1 = 0}\\
{{y_{\rm{P}}}^2 + {{({x_{\rm{P}}} - {d_{D,1}})}^2} - \left( {{x_{\rm{P}}}^2 + {y_{\rm{P}}}^2} \right){{\left( {\frac{{{d_{D,2}}}}{{{d_{D,1}}}}} \right)}^2} = 0}\\
 \vdots \\
{{y_{\rm{P}}}^2 + {{\left({x_{\rm{P}}} - \sum\limits_{k=1}^{K-1}{d_{D,k}}\right)}^2} - \left( {{x_{\rm{P}}}^2 + {y_{\rm{P}}}^2} \right){{\left( {\frac{{{d_{D,K}}}}{{{d_{D,1}}}}} \right)}^2} = 0}
\end{array}} \right..
\end{align}
With the current form of \eqref{eq:Sys_o_Eq}, it seems impossible to obtain the closed-form expression for $d_{D,k}$. Consequently, in this case, the only possibility is to solve \eqref{eq:Sys_o_Eq} numerically.  Using Newton's method \cite{Chong2004}, \eqref{eq:Sys_o_Eq} can be determined by means of the recursion. The initial value for $d_{D,k}$ can be selected as $d_{D,k}^*(0) = \frac{1}{K}$ for all $k$. From \eqref{eq:equ_cond} and \eqref{eq:Sys_o_Eq}, we can complete the proof. 
\end{proof}
We next present optimal relay positions, which minimize the average system bit error rate. Different from outage probability, which serves as a lower bound to the frame error rate for block fading environment and provides an insight into the theoretic-information performance limit, the average bit error rate shows the actual system performance for a desired target spectral efficiency, i.e. modulation level. As such, the following theorem is of importance in this regard.
\begin{theorem}
\label{theo:6}
For a predetermined coordinate of primary receiver $({x_{\rm{P}}},{y_{\rm{P}}})$, linear DF multihop networks under interference constraints provide the best performance in terms of bit error probability if and only if 
\begin{align}
\frac{{d_{D,1}^{}}}{{d_{I,1}^{}}} =  \cdots  = \frac{{d_{D,k}^{}}}{{d_{I,k}^{}}} =  \cdots  = \frac{{d_{D,K}^{}}}{{d_{I,K}^{}}}.
\end{align}
And the corresponding system bit error probability under optimal relay positions is 
\begin{align}
\overline{\rm BER}_{e2e} = \frac{a}{2b}K{\left( {\prod\limits_{k = 1}^K {\frac{{d_{D,k}^*}}{{d_{I,k}^*}}} } \right)^{\frac{\eta }{K}}}.
\end{align}
\end{theorem}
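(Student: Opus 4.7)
The plan is to reduce the BER minimization problem to the same optimization already solved in Theorem~\ref{theo:5}, by observing that both the asymptotic OP and the asymptotic BER are proportional to the same sum of distance ratios raised to the path-loss exponent. First I would recall the i.n.d.\ high-SNR approximation \eqref{BER_e2e_approx}, namely $\overline{\rm BER}_{e2e} \to \frac{a}{2b}\sum_{p=1}^K \frac{1}{\alpha_p}$, and substitute $\alpha_p = \frac{\bar\gamma_{D,p}}{\bar\gamma_{I,p}}\frac{{\cal I}_p}{{\cal N}_0}$ together with the path-loss relation \eqref{eq:dI_dD}, $\lambda_{I,k}/\lambda_{D,k} = (d_{D,k}/d_{I,k})^\eta$. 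This turns $1/\alpha_p$ into $(d_{D,p}/d_{I,p})^\eta\,{\cal N}_0/{\cal I}_p$, so that minimizing $\overline{\rm BER}_{e2e}$ over the relay coordinates is, up to the multiplicative constant $\frac{a}{2b}\,{\cal N}_0/{\cal I}_p$, precisely the objective of \eqref{eq:Opt_Pro_1}.

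Next I would invoke the Cauchy (AM--GM) argument exactly as in the proof of Theorem~\ref{theo:5}:
\begin{equation}
\sum_{k=1}^{K}\left(\frac{d_{D,k}}{d_{I,k}}\right)^{\eta} \;\ge\; K\sqrt[K]{\prod_{k=1}^{K}\left(\frac{d_{D,k}}{d_{I,k}}\right)^{\eta}},
\end{equation}
with equality if and only if the ratios $d_{D,k}/d_{I,k}$ are all equal, which is the claimed characterization. Multiplying by $\frac{a}{2b}$ yields the stated closed form for the minimized BER. Because the feasible region (the simplex $\sum d_{D,k} = 1$, $d_{D,k}>0$) and the geometry tying $d_{I,k}$ to $d_{D,k}$ via \eqref{eq:Pytago} are identical to those in Theorem~\ref{theo:5}, the optimal distances $d_{D,k}^{*}$ are exactly the roots of the nonlinear system \eqref{eq:Sys_o_Eq}, and $d_{I,k}^{*}$ follow from the equality condition \eqref{eq:equ_cond}; no re-derivation is needed.

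The only point requiring a bit of care is the justification that the high-SNR approximation \eqref{BER_e2e_approx} is the correct quantity to minimize: strictly speaking, the exact $\overline{\rm BER}_{e2e}$ in \eqref{BER_e2e} is a more complicated function of the $\alpha_k$'s, so in principle its minimizer could differ from that of the dominant term $\sum 1/\alpha_p$. I would address this by restricting the claim to the interference-limited/high-SNR regime (the same regime in which Theorem~\ref{theo:5} is posed), noting that the neglected cross-terms $\overline{\rm BER}_p\,\overline{\rm BER}_q$ are of higher order and vanish faster than the leading term, so they do not affect the location of the optimum asymptotically. This is the main (mild) obstacle; once acknowledged, the rest of the proof is a one-line invocation of AM--GM followed by a pointer to Theorem~\ref{theo:5} for the explicit coordinates.
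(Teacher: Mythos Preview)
Your proposal is correct and follows exactly the paper's own approach: the paper's proof is essentially omitted, remarking only that the high-SNR BER \eqref{BER_e2e_approx} has the same functional form as the high-SNR OP \eqref{OP_approx}, so Theorem~\ref{theo:6} is inferred directly from Theorem~\ref{theo:5}. Your write-up is in fact more explicit than the paper's, and your caveat about restricting to the high-SNR regime is a fair point that the paper does not discuss.
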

\begin{proof}
The proof is omitted here due to the similarity of the form between OP and BER at the high SNR regime. Then the Theorem~\ref{theo:6} is easily inferred from Theorem~\ref{theo:5}
\end{proof}

From Theorem~\ref{theo:5} and~\ref{theo:6}, it is worthy to point out that the optimal relay positions (both in terms of outage probability and bit error rate) do not depend on the path loss exponent . 

\section{Numerical Results and Discussion}
\label{NumericalResults}
The purpose of this section is twofolds. We first provide numerical results to confirm the derived analytical expressions and then show the network performance advantage offered by relay position optimization. 

For illustrative purpose, we consider a linear multi-hop network in
a 2-D plane, where all SUs are co-linearly located and the distance
between the cognitive source and the cognitive destination is
normalized to one. Furthermore, the cognitive source and the
cognitive destination are located at points with coordinates (0,0)
and (1,0), respectively.  Each cognitive relay node is equidistant
from each other, i.e. ${d_{\text{CR}_k,\text{CR}_{k + 1}}} = 1/K$.
The average channel power for the transmission between node
$\mathsf{A}$ and node $\mathsf{B}$ is modeled as ${\lambda
_{\mathsf{A},\mathsf{B}}} = {d_{\mathsf{A},\mathsf{B}}}^{ - \eta }$
where $\eta$ denotes the path loss exponent with $\mathsf{A} \in
\left\{{\text{CR}_1, \ldots ,\text{CR}_{K - 1}} \right\}$ and
$\mathsf{B} \in \left\{ {\text{PU},\text{CR}_2,\ldots,\text{CR}_K}
\right\}$. In all examples, we locate the PU-Rx at coordinate (0.35,
0.35) and set $\eta = 4$.

\begin{figure}[!h]
\centering
\includegraphics[width=\ScaleIfNeeded]{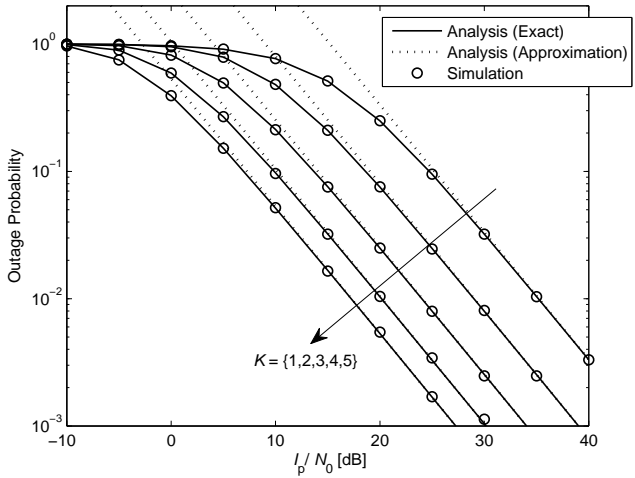}
\caption{Outage probability versus average ${\mathcal{I}}_p/{\mathcal{N}}_0$.} 
\label{Fig1}
\end{figure}
\begin{figure}[!h]
\centering
\includegraphics[width=\ScaleIfNeeded]{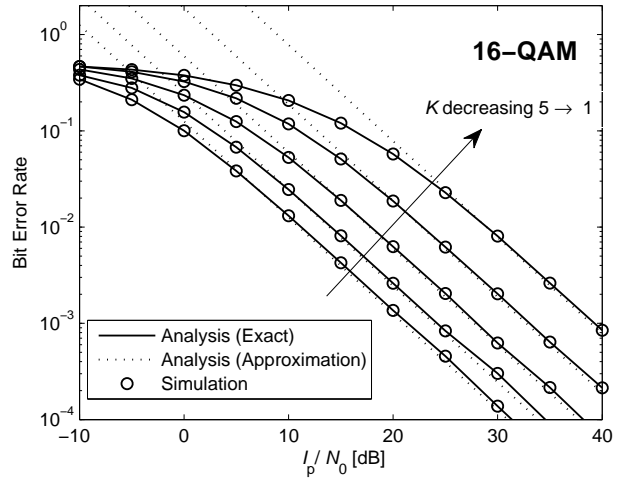}
\caption{Bit error probability versus average ${\mathcal{I}}_p/{\mathcal{N}}_0$.} 
\label{Fig2}
\end{figure}

In Fig.~\ref{Fig1} and Fig.~\ref{Fig2}, we respectively illustrate
the outage probability and BER of the multi-hop cognitive networks
as a function of interference temperature for different number of
hops. As can be observed from the two figures, the performance is
enhanced as the number of cognitive relay hops $K$ increases. It is
important to note that the diminishing gain returns as the number of
hops increases. The numerically evaluated results demonstrate the
correctness of the presented analysis.

\begin{figure}[!h]
\centering
\includegraphics[width=\ScaleIfNeeded]{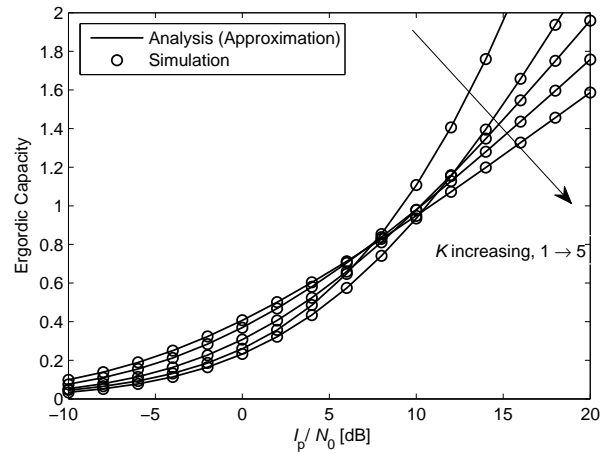}
\caption{Ergodic capacity versus average
${\mathcal{I}}_p/{\mathcal{N}}_0$.} \label{Fig3}
\end{figure}

Figure~\ref{Fig3} displays the capacity performance for cognitive
multi-hop transmission by varying the number of hops, $K=1,2,\ldots,
5$. It is worth noting that for interference-limited regime (low
${\mathcal{I}}_p/{\mathcal{N}}_0$), the system with large $K$ offers
improved performance. For a high interference temperature level,
multi-hop transmission with small hops is more favorable. It can be
explained by using the fact that with the channel model and
time-sharing schedule used, at low ${\mathcal{I}}_p/{\mathcal{N}}_0$
transmission over shorter distance corresponds to increased
effective SNRs while at high ${\mathcal{I}}_p/{\mathcal{N}}_0$
increasing the number of hops is equivalent to reducing the
effective transmission bandwidth of each hop. In addition, the numerical results 
show that the analytical results are in good agreement with the simulation results. 

\begin{figure}[!h]
\centering
\includegraphics[width=\ScaleIfNeeded]{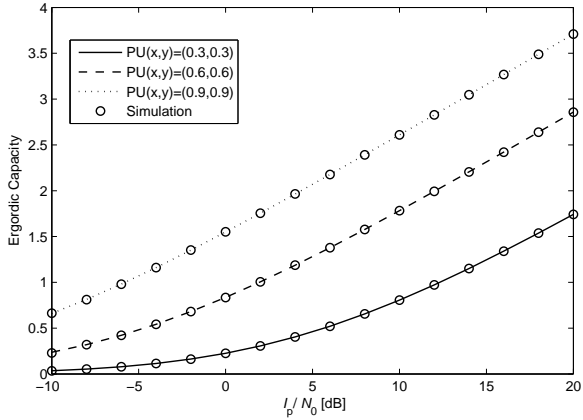}
\caption{Effect of PU-Rx's location on ergodic capacity.}
\label{Fig4}
\end{figure}

Fig.~\ref{Fig4} compares the capacity performance of multi-hop cognitive relay networks for different positions of the PU-Rx given the same number of hops. Observing the results in the figure, we can see that the system performance improves when the primary node is located farther away from the secondary relay
transmitters, as expected.

Up to this point, we have not studied the effect of the proposed relay position optimization. In doing so, we considerer three relay position profiles: randomization, equalization and optimization, denoted as Profile A, Profile B and Profile C, respectively. In Profile A, all secondary relays are chosen randomly from a uniform distribution. In Profile B, the distance between any two nodes is the same. And in Profile C, secondary relays are set using the rule, proposed in Sect.~\ref{Sec:RelayPositionOptimization}. Table~\ref{tab:Table1} demonstrates the results for $K=2$ and 4. 
\begin{table}[htbp]
\centering
\begin{tabular}{|l||c|c|c|}
\hline
    & Profile A & Profile B & Profile C \\ 
\hline
\hline
\multirow{2}{*}{$K=2$} 
 & $d_{D,1}=0.1767$ & $d_{D,1}=0.5$ & $d_{D,1}=0.4192$  \\
 & $d_{D,2}=0.8333$ & $d_{D,2}=0.5$ & $d_{D,2}=0.5808$  \\
\hline
\multirow{4}{*}{$K=4$} 
 & $d_{D,1}=0.20$ & $d_{D,1}=0.25$ & $d_{D,1}=0.1915$  \\
 & $d_{D,2}=0.28$ & $d_{D,2}=0.25$ & $d_{D,2}=0.1900$  \\
 & $d_{D,3}=0.36$ & $d_{D,3}=0.25$ & $d_{D,3}=0.2492$  \\
 & $d_{D,4}=0.16$ & $d_{D,4}=0.25$ & $d_{D,4}=0.3693$  \\
\hline 
\end{tabular}
\caption{Comparision of three relay position profiles}
\label{tab:Table1}
\end{table}
%
\begin{figure}[!h]
\centering
\includegraphics[width=\ScaleIfNeeded]{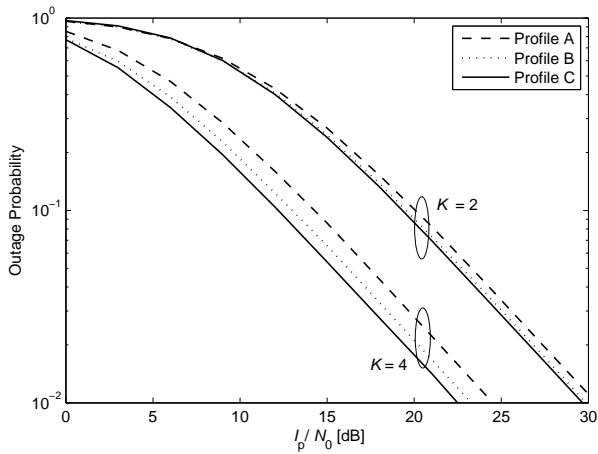}
\caption{Comparison of three relay position profiles.}
\label{Fig5}
\end{figure}

\begin{figure}[!h]
\centering
\includegraphics[width=\ScaleIfNeeded]{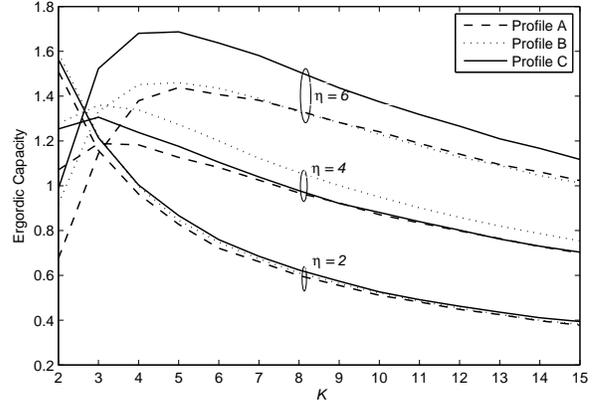}
\caption{Effect of path loss exponent on ergodic capacity.}
\label{Fig7}
\end{figure}

In Fig.~\ref{Fig5}, we can see that Profile C outperforms Profile B, which, in turns, outperforms Profile A.  This observation also repeats in Fig.~\ref{Fig7}, where the effect of path loss exponent is investigated. From Fig.~\ref{Fig7},  we can see that the optimal number of hops is a complicated function of $\eta$. Furthermore, the advantage of profile C, i.e. the capacity gap, becomes bigger with higher value of $\eta$. With small $\eta$, the increase of hops results in the loss of the ergodic capacity. However, with large values of $\eta$, there exists a value of $K$, that makes the system ergodic capacity maximize. It can be explained by using the fact that  with small $\eta$, the benefit of path loss gain is not enough to compensate the loss due to the use of multi orthogonal time slots for multihop communications. 

\section{Conclusion}
\label{Conclusion}
We have investigated the performance of cognitive regenerative multi-hop relay networks using the underlay approach. We have derived the closed-form expressions for the outage probability, BER, and ergodic capacity over i.n.d. Rayleigh fading channels. High analysis for outage probability and bit error rate have also made to provide insights into the system behaviors. The numerical results show that under the interference constraints inflicted by the primary network, the multi-hop transmission still offers a considerable gain as compared to direct transmission and thus makes it an attractive proposition for cognitive networks.
%
\section*{\uppercase{Acknowledgments}}
This research was supported by the Vietnam National Foundation for Science and Technology Development
(NAFOSTED) (No. 102.01-2011.22).

\bibliographystyle{IEEEtran}
\bibliography{IEEEabrv,reference}

\epsfysize=3.2cm
\begin{biography}{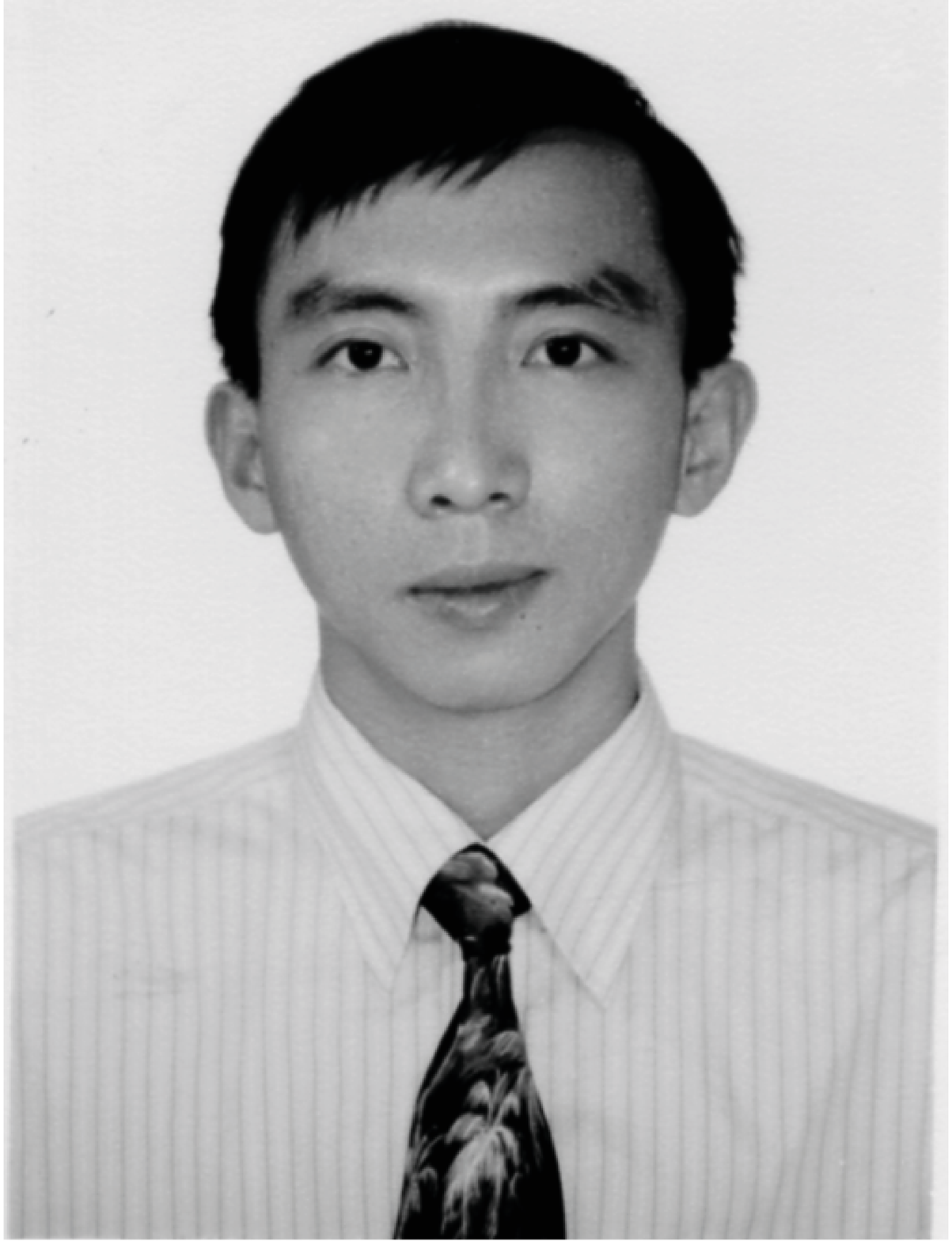}{Vo Nguyen Quoc Bao} was born in Nha Trang, Khanh  Hoa Province, Vietnam, in 1979. He received the B.E. and M.Eng. degree in electrical engineering from Ho Chi Minh City University of Technology (HCMUT), Vietnam, in 2002 and 2005, respectively, and Ph.D. degree in electrical engineering from University of Ulsan, South Korea, in 2010. In 2002, he joined the Department of Electrical Engineering, Posts and Telecommunications Institute of Technology (PTIT), as a lecturer. Since February 2010, he has been with the Department of Telecommunications,  PTIT, where he is currently an Assistant Professor. He is the author or coauthor of more than 60 technical papers in the area of wireless and mobile communications. His major research interests are modulation and coding techniques, MIMO systems, combining techniques, cooperative communications, and cognitive radio. Dr. Bao is a member of Korea Information and Communications Society (KICS), The Institute of Electronics, Information and Communication Engineers (IEICE) and the Institute of Electrical and Electronics Engineers (IEEE).
\end{biography}

\begin{biography}{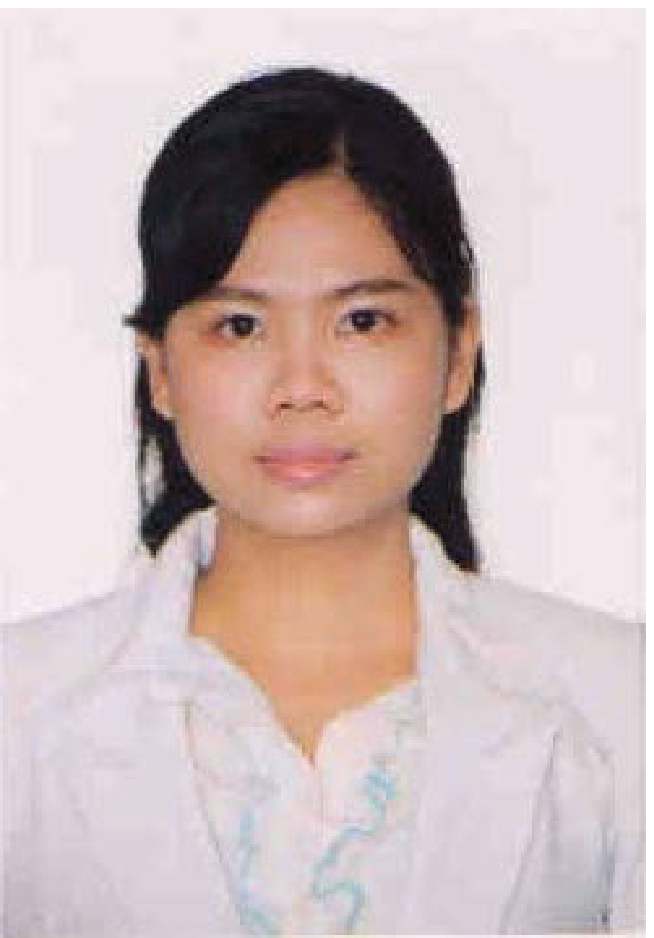}{Tran Thien Thanh} received the B.E. and M.Eng. degree in Electronics and Telecommunications Engineering from Ho Chi Minh City University of Technology, Vietnam, in 2003 and 2011, respectively. Currently she is a lecturer at Faculty of Electrical \& Electronic Engineering, Ho Chi Minh City University of Transport (UT-HCMC). Since Sep. 2011, she has been working toward the PhD degree in Faculty of Electrical and Electronics Engineering, the Ho Chi Minh city University of Technology. Her research interests include the areas of communication theory, signal processing, and networking.
\end{biography}

\begin{biography}{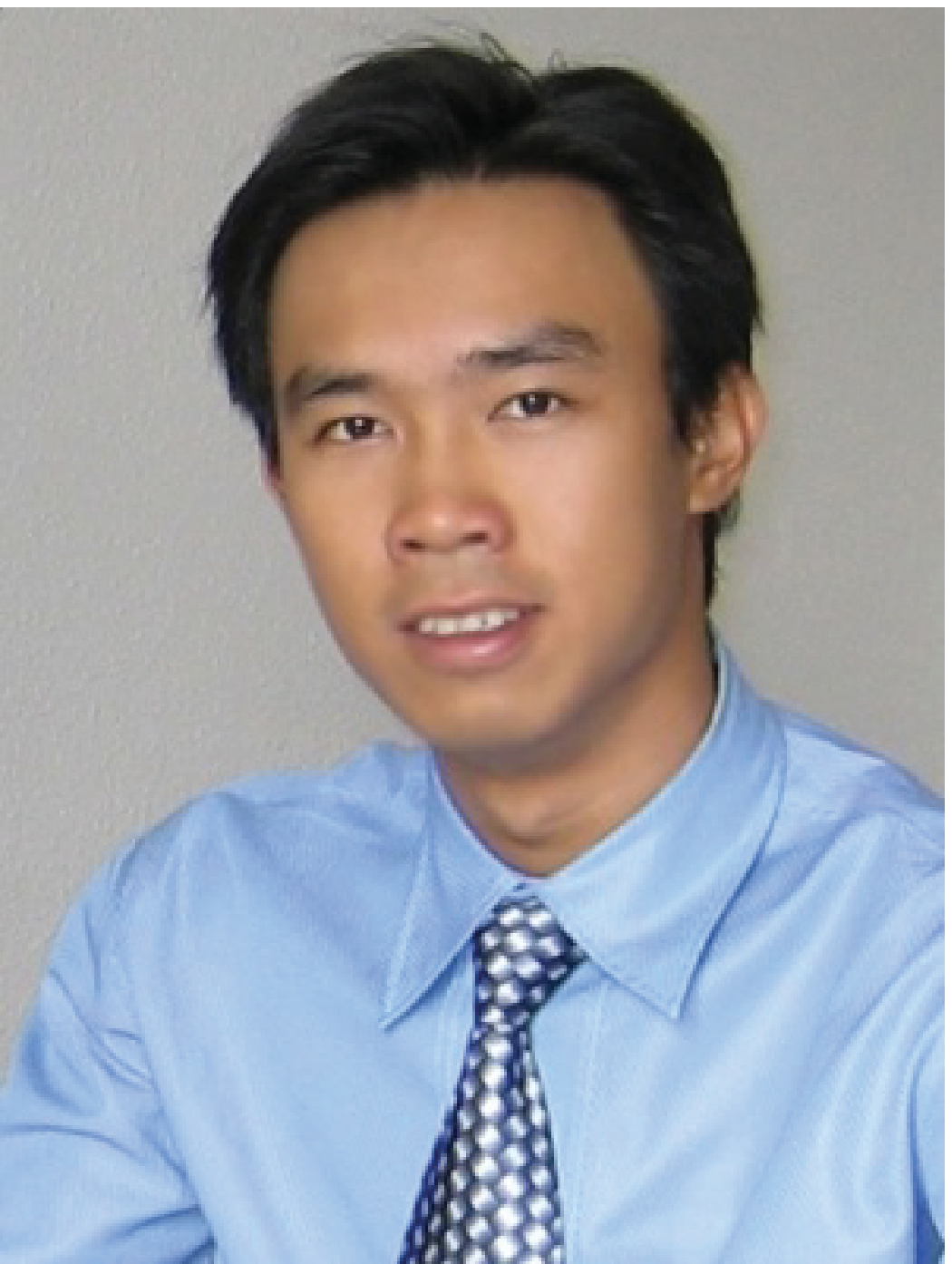}{Nguyen Tuan Duc} received the M.Sc. degree in electrical engineering from Telecom ParisTech University, France, and the Ph.D. degree in signal processing for telecommunication from the University of Rennes 1, France, in 2005 and 2009, respectively. In 2009, he was a Postdoctoral Researcher in cooperative communications for wireless sensor networks with the Institut de Recherche en Informatique et Systemes Aleatoires (IRISA) Research Center, France. Since 2010, he has been a Lecturer and Researcher with the School of Electrical Engineering, Ho Chi Minh City International University, Vietnam National University, Ho Chi Minh City, Vietnam. His research interests include cooperative communications, wireless sensor networks, energy constrained wireless networks, and wireless ad hoc networks.
\end{biography}

\begin{biography}{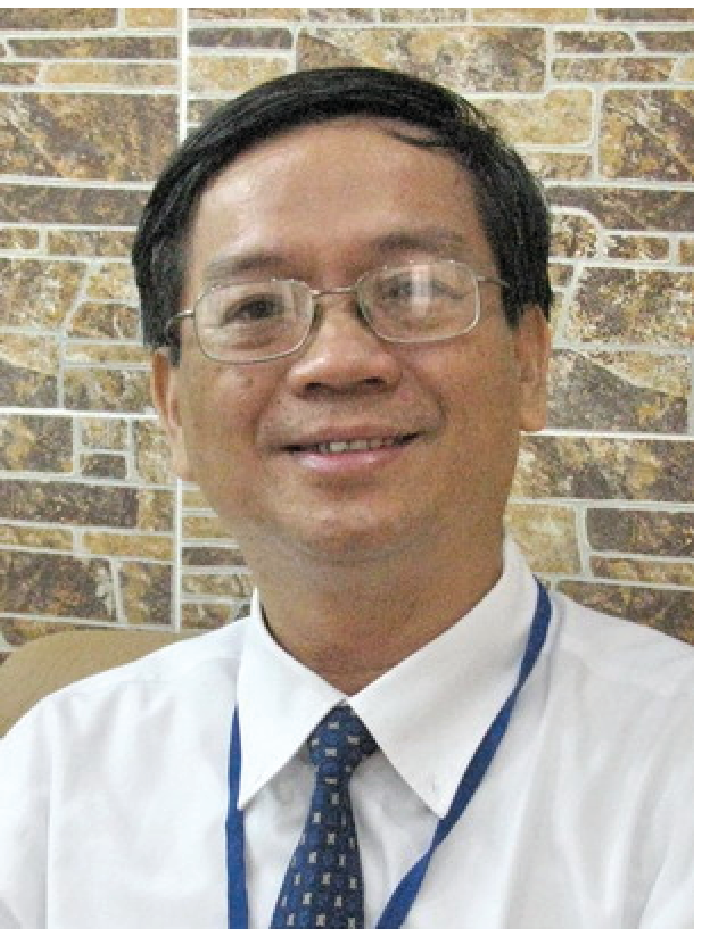}{Thanh Dinh Vu} received the B.E. in HoChiMinh City University of Technology, Vietnam, in 1982 and the M.Sc. and PhD. Degrees in the National Polytechnical Institute of Grenoble, France in 1989 and 1993, respectively. Since then, he has been Professor of the HoChiMinh City University of Technology, Vietnam. His teaching and research fields lie on the microwave techniques and technologies, communication systems and signal processing. He is actually member of IEEE, of the Radio-Electronics Vietnam Society (REV).

\end{biography}

\end{document}